\numberwithin{equation}{section} %
\theoremstyle{plain}
   \newtheorem{thm}{\hspace{\parindent}{\sc Theorem}}[section] %
   \newtheorem{pro}[thm]{\hspace{\parindent}Proposition}
   \newtheorem{cor}[thm]{\hspace{\parindent}Corollary}
   \newtheorem{lem}[thm]{\hspace{\parindent}Lemma}
\theoremstyle{remark} %
   \newtheorem{rem}{\hspace{\parindent}Remark}[section] %
\newtheorem{exmp}{\hspace{\parindent}Example}[section]%
\newcommand{\alphahat}{\widehat{\alpha}}
\newcommand{\alphahatj}{\widehat{\alpha}^{(j)}}
\newcommand{\betahat}{\widehat{\beta}}
\newcommand{\bR}{\mathbb{R}}
\newcommand{\bq}{\bold q}
\newcommand{\Czerospace}{C^{\infty}_0(\mathbb{R}^d)}
\newcommand{\domain}{\It \times \bR^d}
\newcommand{\Gepsilon}{G_{\epsilon}}
\newcommand{\It}{I_{T}}
\newcommand{\kddelta}{K_{D\Delta}}
\newcommand{\lambdamax}{\lambda_{\max}}
\newcommand{\qts}{q^{t,s}_{x,y}}
\newcommand{\qdelta}{q_{\Delta}}
\newcommand{\Sspace}{{\cal S}}
\newcommand{\supp}{\text{supp}}
\newcommand{\xidelta}{\xi_{\Delta}}
\def\dbar{{\mathchar'26\mkern-12mud}}
\begin{document}
\title{Notes on the Feynman path integral for the Dirac equation}
\author{Wataru Ichinose\thanks{Research partially supported by Grant-in-Aid for Scientific Research 
No.26400161,  Ministry of Education, Culture, Sports, Science and 
Technology, Japanese
Government.}} %
\date{}
\maketitle %
\begin{quote}
{\small Department of Mathematics, Shinshu University,
Matsumoto 390-8621, Japan. \\ E-mail: ichinose@math.shinshu-u.ac.jp}%
\end{quote}\par
\noindent{\small 
	This paper is a continuation of  the author's preceding one.  In the preceding paper the author has rigorously constructed the Feynman path integral  for the Dirac equation  in the form of the sum-over-histories, satisfying  the superposition principle, over all paths of one electron in  space-time that goes in any direction  at any speed, forward and backward in time with a finite number of turns.  In the present paper, first we will generalize the results in the preceding paper and secondly prove in a direct way that  our  Feynman path integral satisfies the unitarity principle and the causality one. 	}
\vspace{0.3cm}\\
\noindent{\small {\it Keywords:} The Feynman path integral; Dirac equation; Unitarity; Causality.}\\
{\small Mathematics Subject Classification 2010: 81Q30, 35Q40}
\section{Introduction}
This paper is a continuation of  the author's preceding one \cite{Ichinose 2014}.   Let   $T > 0$ be an arbitrary constant. We will study  the Dirac equation
\begin{align} \label{1.1}
& i\hbar\frac{\partial u}{\partial t} (t)  = H(t)u(t) \notag \\
& :=\left[c\sum_{j=1}^d \alphahat^{(j)} \left(\frac{\hbar}{i}\frac{\partial}{\partial x_j} - eA_j(t,x) \right) + \betahat mc^2 + e V(t,x)I_N\right]u(t),
\end{align}
where $t \in I_T :=[-T, T]$, $x = (x_1,\dotsc,x_d) \in \mathbb{R}^d$, $u(t) = {}^t(u_1(t),\dotsc,u_N(t)) 
\in \mathbb{C}^N$, $(V(t,x),A(t,x)) = (V,A_1,\dotsc,A_d) \in \bR^{d+1}$ is an electromagnetic potential, $\alphahat^{(j)} (j = 1,2,\dotsc,d)$ and $\betahat$ are constant $N\times N$ Hermitian matrices, $I_N$ is the $N\times N$ identity matrix, $c$ is the velocity of light, $\hbar$ is the Planck constant and $e$ is the charge of an electron.  Though the relations
\begin{equation} \label{1.2}
\alphahat^{(j)}\alphahat^{(k)} + \alphahat^{(k)}\alphahat^{(j)} = 2\delta^{jk}I_N, \  \alphahat^{(0)} = \betahat 
\end{equation}
for $j, k = 0,1,\dotsc,d$ are assumed for the genuine Dirac equation (cf. (8) of \S 67 in \cite{Dirac}), in the present paper $\alphahat^{(j)}$ and $\betahat$ are assumed to be only Hermitian as in \cite{Ichinose 2014}, where $\delta^{jk}$ denotes the Kronecker delta.
 For the sake of simplicity we suppose $\hbar = 1$ and $e = 1$ hereafter, and will sometimes omit $I_N$.
 \par
In the preceding paper \cite{Ichinose 2014} the author has rigorously constructed   the Feynman path integral for the Dirac equation \eqref{1.1} in the form of the sum-over-histories,
satisfying the superposition principle, over all possible paths of one electron in   space-time  that goes in any direction at any speed, forward and backward in time with a finite number of turns.  In addition, the author has proved that the Feynman path integral constructed above satisfies the Dirac equation \eqref{1.1}. 
	It should be noted that Feynman had said for the application of his path integral to quantum electrodynamics that the electron goes in any direction at any speed  forward and backward in time, as seen  on p.376 of \cite{Dyson 1980}, in \cite{Feynman positron} and on p.388 of \cite{Schweber}.  
\par
	In the present paper, first we will generalize the results  in \cite{Ichinose 2014} and secondly prove in a direct way that our Feynman path integral satisfies the unitarity principle and the causality one.  We basically owe  our arguments in their proofs to the theory of pseudo-differential operators.
		\par
	First,  we will prove that the assumptions about a magnetic strength tensor can be generalized for the Feynman path integral to be determined.  The assumptions about this haven't been able to be generalized for a long time since \cite{Ichinose 1999} in 1999.  Our proof will be obtained by returning to the original idea of Theorem 3.7 in \cite{Ichinose 1997}.
	\par
	The second generalization is  in the $L^2$ space.  In the present paper we will determine  the Feynman path integral in the form of the sum-over-histories, satisfying the superposition principle, over all possible paths of one electron   that goes in any direction at any speed,  forward and backward in time  particularly with a countably infinite number of turns.  Here, $L^2 = L^2(\bR^d)$ denotes the space of all square integrable functions on $\bR^d$ with  inner
product $(f,g) := \int f(x)\overline{g(x)}dx$ and  norm $\Vert f\Vert$, where $\overline{g(x)}$ is the complex conjugate of $g(x)$. Our proof will be obtained  as in the proof of Theorem 2.1 in \cite{Ichinose 2014} by using the  estimate \eqref{3.8} in the present paper.
\par
  Next, we will study the properties of our Feynman path integral for the Dirac equation. First, we will prove that the Feynman path integral makes a unitary operator on the $(L^2)^N$ space. This result gives another proof of the unitarity on $(L^2)^N$ of  the fundamental solution to the Dirac equation \eqref{1.1}, which  is well known in the theory of partial differential equations. Our proof is based on the estimate \eqref{3.8} in the present paper too.
	\par
     Secondly, we will prove that our Feynman path integral 
satisfies the causality principle, i.e.  has the speed  not exceeding the velocity of light of propagation of disturbances.  This result gives another proof that every  solution to the Dirac equation has the same property, which  is also well known in the theory of partial differential equations.  Our proof is based on the Paley-Wiener theorem (cf.Theorem IX.11 in \cite{Reed-Simon}), which theorem characterizes the size of the support of  functions by their Fourier transforms.
As seen above, to construct the Feynman path integral we  use  the paths,  of one electron in  space-time, violating  causality.  Consequently our result, that the Feynman path integral satisfies causality, implies that the probability amplitudes for such paths are completely canceled out by the effect of interference among themselves and other probability ones, as argued in \S 1-3 of \cite{Feynman-Hibbs}. \par
 Our proof that the Feynman path integral satisfies unitarity and causality is more direct than the proof in the theory of partial differential equations that every solution to the Dirac equation has the same properties. Our results are yielded  from (4.1) and (4.6), and \eqref{2.7} and the Paley-Wiener theorem, respectively.
\par
  The plan of the present paper is as follows.  In \S 2 we will state the results on the Feynman path integral. In \S 3 we will   prove them.  In \S 4 we will state the results on unitarity and causality for  the Feynman path integral and  prove them.
\section{Results on the Feynman path integral}
For an $x = (x_1,\dotsc,x_d) \in R^d$ and a multi-index
$\alpha = (\alpha_1,\dotsc,\alpha_d)$,  we  write $|x| = \sqrt{\sum_{j=1}^d x^2_j}$, $|\alpha| = 
\sum_{j=1}^d
\alpha_j$, $x^{\alpha} =  x_1^{\alpha_1}
\cdots  x_d^{\alpha_d}, \partial_{x_j} = \partial /\partial
x_j$ and  $\partial_x^{\alpha} = \partial_{x_1}^{\alpha_1}
\cdots \partial_{x_d}^{\alpha_d}$.  In the present paper we often use symbols $C, C_{\alpha}, C_{\alpha,\beta}$ and $C_a$ to write down constants, although these values are different in general. \par
   Let us write  the classical Hamiltonian function
\begin{equation} \label{2.1}
  {\cal H}(t,x,p) =  c\sum_{j=1}^d \alphahatj \bigl(p_j - A_j(t,x)\bigr) + \betahat mc^2 +  V(t,x) I_N
\end{equation}
for $H(t)$ defined by \eqref{1.1} as in (23) on p.261 of \cite{Dirac}, where $p \in \bR^d$ is the canonical momentum.  We write the kinetic momentum as $\xi := p - A(t,x) \in \bR^d$.  Then the classical Lagrangian function is given by
\begin{align} \label{2.2}
  & {\cal L}(t,x,\dot{x},\xi) = p\cdot \dot{x} - {\cal H}(t,x,p) \notag \\
  & = \xi\cdot \dot{x} + \dot{x}\cdot A(t,x) - V(t,x)I_N - 
  (c\alphahat\cdot\xi + \betahat mc^2),
\end{align}
where $p\cdot \dot{x} = \sum_{j=1}^d p_j\dot{x}_j, \alphahat = (\alphahat^{(1)},\dotsc,\alphahat^{(d)})$ and $\alphahat\cdot\xi = \sum_{j=1}^d \alphahatj\xi_j$. \par
  Let $t$ and $s$ be in $\It$ such that $t \not= s$.  For $x$ and $y$  in $\bR^d$ we define
\begin{equation}  \label{2.3}
     q^{t,s}_{x,y}(\theta) := y + \frac{\theta - s}{t-s}(x - y)
\end{equation}
in $s \leq \theta \leq t$ or $t \leq \theta \leq s$. Let $\xi \in R^d$ and consider a path $(\qts(\theta),\xi) \in \bR^{2d}$ in phase space.  The classical action for this path is given by 
\begin{align}  \label{2.4}
 &  S(t,s;x,\xi,y) := \int_s^t{\cal L}(\theta,\qts(\theta),\dot{q}^{t,s}_{x,y}(\theta), \xi)d\theta = (x - y)\cdot\xi\notag \\ 
       &   + \int_s^t\Bigl\{\dot{q}^{t,s}_{x,y}(\theta)\cdot A(\theta,q^{t,s}_{x,y}(\theta)) - V(\theta,q^{t,s}_{x,y}(\theta))\Bigr\}d\theta -(t - s)(c\alphahat\cdot\xi + \betahat mc^2) \notag \\
       & = (x - y)\cdot\xi +    (x - y)\cdot\int_0^1
 A(t -\theta\rho,x - \theta(x - y))d\theta
       &  \notag\\
 &  -\rho\int_0^1V(t -\theta\rho,x - \theta(x - y))d\theta- \rho(c\alphahat\cdot\xi + \betahat mc^2), \quad \rho = t - s
       \end{align} 
from \eqref{2.2}, where $\dot{q}^{t,s}_{x,y}(\theta) = d\qts(\theta)/d\theta$.  The matrices $\alphahat^{(j)}$ and $\betahat$ are assumed to be Hermitian and so is  $S(t,s;x,\xi,y)$.
 Noting \eqref{2.4}, we will define 
       \begin{equation} \label{2.5}
       S(s,s;x,\xi,y) := (x - y)\cdot\xi +    (x - y)\cdot\int_0^1
 A(s,x - \theta(x - y))d\theta,
       \end{equation}
       which we write $\displaystyle \int_s^s{\cal L}(\theta,q^{s,s}_{x,y}(\theta),\dot{q}^{s,s}_{x,y}(\theta), \xi)d\theta$ formally.　\par
       Let $t_i \in \It$ and $t_f \in \It$ be an initial time and a final one respectively, where $t_i \leq t_f$ or $t_i > t_f$.  Take $\tau_j \in \It\ (j = 1,2,\dotsc,\nu - 1)$ and consider a time-division $\Delta := \{\tau_j\}_{j=1}^{\nu-1}$, where
$\tau_j \leq \tau_{j+1}$ or $\tau_j > \tau_{j+1}$.  We set $\tau_0 = t_i$ and $\tau_{\nu} = t_f$.  We take a point $x \in \mathbb{R}^d$ and fix it.
 Taking points
$x^{(j)} \in \mathbb{R}^d\ (j = 0,1,\dotsc,\nu-1)$ arbitrarily,  we define a piecewise linear  path $(\Theta_{\Delta},\qdelta(x^{(0)},\dotsc,x^{(\nu-1)},x))$ in  space-time $\domain$ by joining  $(\tau_j,x^{(j)})\ (j = 0,1, \dotsc,\nu, x^{(\nu)} = x)$  in order. 
Next,  taking  points $\xi^{(j)} \in \mathbb{R}^d\ (j = 0,1,\dotsc,\nu-1)$ arbitrarily,  we also define a piecewise constant path $(\Theta_{\Delta},\xidelta(\xi^{(0)},\dotsc,\xi^{(\nu-1)}))$ in $\domain$ by using $\xidelta$ that has the value $\xi^{(j)}\ (j = 0,1, \dotsc,\nu-1)$ for $\theta \in [\tau_j,\tau_{j+1}]$ if $\tau_j \leq \tau_{j+1}$ or $\theta \in [\tau_{j+1},\tau_{j}]$ if $\tau_{j+1} < \tau_{j}$.
 We note that the paths $(\Theta_{\Delta},\qdelta)$ and $(\Theta_{\Delta},\xidelta)$  go in any direction forward and backward in time and that $\qdelta$ has any speed, even the infinite speed. \par
    Let us consider the path $(\Theta_{\Delta},\qdelta(x^{(0)},\dotsc,x^{(\nu-1)},x), \xidelta(\xi^{(0)},\dotsc,\xi^{(\nu-1)})) $ in $\It \times \mathbb{R}^{2d}$ connecting $(t_i,x^{(0)},\xi^{(0)})$ with $(t_f,x,\xi^{(\nu-1)})$.
We define the probability amplitude $\exp *iS(t_f,t_i,\qdelta,\xidelta)$ for this path in terms of the classical action \eqref{2.4} and \eqref{2.5} by the product of  unitary matrices
\begin{align} \label{2.6}
 & \exp i\int_{\tau_{\nu-1}}^{t_f} {\cal L}
 (\theta,q^{t_f,\tau_{\nu-1}}_{x,x^{(\nu-1)}}(\theta),\dot{q}^{t_f,\tau_{\nu-1}}_{x,x^{(\nu-1)}}(\theta),\xi^{(\nu-1)})d\theta
  \cdot
  \exp i\int_{\tau_{\nu-2}}^{\tau_{\nu-1}}
  {\cal L}
 (\theta,q^{\tau_{\nu-1},\tau_{\nu-2}}_{x^{(\nu-1)},x^{(\nu-2)}}(\theta),
 \notag  \\
  & \dot{q}^{\tau_{\nu-1},\tau_{\nu-2}}_{x^{(\nu-1)},x^{(\nu-2)}}(\theta),\xi^{(\nu-2)})d\theta \cdots \cdot \exp i\int_{t_i}^{\tau_{1}}
  {\cal L}
 (\theta,q^{\tau_{1},t_i}_{x^{(1)},x^{(0)}}(\theta),\dot{q}^{\tau_{1},t_i}_{x^{(1)},x^{(0)}}(\theta),\xi^{(0)})d\theta. 
\end{align} \par
   Let $\Sspace = \Sspace(\mathbb{R}^d)$ be  the Schwartz  space of all rapidly decreasing functions on $\bR^d$ with the well-known topology.  We take  a  function $\chi \in \Sspace(\mathbb{R}^d)$ such that $\chi(0) = 1$.  Let $f = {}^t(f_1,\dotsc,f_N) \in \Sspace(\mathbb{R}^d)^N$ and  define an approximation $K_{D\Delta}(t_f,t_i)f$ of the Feynman path integral  for the Dirac equation \eqref{1.1} by
\begin{align} \label{2.7}
   &  K_{D\Delta}(t_f,t_i)f =  \iint e^{*iS(t_f,t_i,\qdelta,\xidelta)} 
f(\qdelta(t_i)) {\cal D}\qdelta{\cal D}\xidelta 
 \notag \\
& :=  \lim_{\epsilon \rightarrow +0} \int
\dotsi\int e^{*iS(t_f,t_i,\qdelta,\xidelta)} f(x^{(0)})\prod_{j=0}^{\nu-1}\left\{\chi(\epsilon x^{(j)})\chi(\epsilon \xi^{(j)})\right\} dx^{(0)}\cdots dx^{(\nu-1)} \notag \\ 
&  \ \ \cdot\dbar \xi^{(0)}\cdots \dbar \xi^{(\nu-1)},
   \end{align}
where $\dbar \xi^{(j)} = (2\pi)^{-d}d\xi^{(j)}$.  As stated in Theorem 2.A below,  $K_{D\Delta}(t_f,t_i)f$ is determined independently of the choice of $\chi$.  Hence the integral \eqref{2.7} is often called the oscillatory integral and  written as 
\begin{equation*}
\text{Os}-\int\dotsi\int e^{*iS(t_f,t_i,\qdelta,\xidelta)} f(x^{(0)}) dx^{(0)}\cdots dx^{(\nu-1)}\dbar \xi^{(0)}\cdots \dbar \xi^{(\nu-1)}
\end{equation*}
(cf. p. 45 of \cite{Kumano-go}). \par
   For $f = {}^t(f_1,\dotsc,f_N) \in L^2(\mathbb{R}^d)^N$ we write its norm $\sqrt{\sum_{j=1}^N\Vert f_j\Vert^2}$ as $\Vert f\Vert$. Let $E(t,x) = (E_1,\dots, E_d) \in \mathbb{R}^d$ and  $\bigl(B_{jk}(t,x)\bigr)_{1\leq j < k \leq d} \in \bR^{d(d-1)}$ be electric strength and a magnetic strength tensor, respectively.     In Theorems 2.1 and 2.2 of \cite{Ichinose 2014} we have proved the following. \par
\vspace{0.3cm}
\noindent{\bf Theorem 2.A}.  {\it Let $\partial_x^{\alpha}E_j(t,x)\ (j = 1,2,\dotsc,d), \partial_x^{\alpha}B_{jk}(t,x)\ (1 \leq j < k \leq d)$ and $\partial_tB_{jk}(t,x)$ be continuous in $I_T\times R^{d}$ for all $\alpha$.  We assume
\begin{equation} \label{2.8} 
 |\partial_x^{\alpha}E_j(t,x)| \leq C_{\alpha},
     \quad |\alpha| \geq 1,
     \end{equation}
     \begin{equation} \label{2.9}
 |\partial_x^{\alpha}B_{jk}(t,x)| \leq C_{\alpha}<x>^{-(1 + \delta_{\alpha})},
      \quad  |\alpha| \geq 1
\end{equation}
in $\It\times R^{d}$ with constants $\delta_{\alpha} > 0$ for $j = 1,2,\dotsc,d$ and $1\leq j < k \leq d$, where $<x> = \sqrt{1 + |x|^2}$.  Let $(V,A)$ be an electromagnetic potential inducing $E(t,x)$ and 
$(B_{jk}(t,x))_{1\leq j < k \leq d}$ via equations
\begin{align} \label{2.10}
     & E = -\frac{\partial A}{\partial t} - \frac{\partial V}{\partial 
x}, \notag\\
          & B_{jk} =  \frac{\partial A_k}{\partial x_j}  -\frac{\partial 
A_j}{\partial x_k}
\quad (1 \leq j <  k \leq d)
\end{align}
such that $V, \partial_{x_j}V, \partial_{t}A_k$ and $\partial_{x_j}A_k\ (j,k = 1,2,\dotsc,d)$ are continuous in $\It\times R^{d}$, where  $\partial V/\partial x = (\partial V/\partial x_1,\dots,\partial V/\partial x_d)$. 
We take $t_i$ and $t_f $ in $\It$.  Let $\tau_j \in \It\ (j = 1,2,\dots,\nu-1)$, determine $\Delta = \{\tau_j\}_{j=1}^{\nu-1}$ and define $K_{D\Delta}(t_f,t_i)f$ for $f \in \Sspace^N$ by \eqref{2.7}.  \par
Then we have: (1) $K_{D\Delta}(t_f,t_i)$ on $\Sspace^N$ is determined independently of the choice of $\chi \in \Sspace$ and can be extended to a bounded operator on $(L^2)^N$.  (2)  Let $f \in (L^2)^N$. Let $L_0 \geq 0$ be an arbitrary constant and consider only time-divisions $\Delta$ satisfying
    \begin{equation} \label{2.11}
    \sum_{j=0}^{\nu-1}|\tau_{j+1} - \tau_j| \leq L_0.
    \end{equation}
Then, as   $|\Delta| := \max_{j=0,1,\dots,\nu-1}|\tau_{j+1}-\tau_j|  \rightarrow 0$,
    $K_{D\Delta}(t_f,t_i)f$ converges  in $(L^2)^N$  uniformly with respect to $t_f$and $t_f$ in $\It$,  and this limit $K_{D}(t_f,t_i)f$, called the Feynman path integral,  is determined independently of the choice of  $L_0$.
     (3) $K_{D}(t_f,t_i)f$ for  $f \in (L^2)^N$ belongs to ${\cal E}^0_{t_f}(\It;(L^2)^N)$ and satisfies the Dirac equation \eqref{1.1} in  the distribution sense with $u(t_i) = f$, where ${\cal E}^j_{t_f}(\It;(L^2)^N)\ (j = 0,1,\dotsc)$ denotes the space of all $(L^2)^N$-valued j-times continuously differentiable functions on $\It$.  (4)  Let $\psi(t,x)$ be a real-valued function such that $\partial_{x_j}\partial_{x_k}\psi(t,x)$ and $\partial_{t}\partial_{x_j}\psi(t,x)$ $(j,k = 1,2,\dotsc,d)$ are continuous in $\It \times \mathbb{R}^d$.  We  consider the gauge transformation 
\begin{equation}  \label{2.12}
    V' = V -\frac{\partial\psi}{\partial  t}, \quad  A'_j = A_j + \frac{\partial\psi}{\partial  x_j}\quad (j = 1,2,\dots,d)
\end{equation}
and write \eqref{2.7} for this $(V',A')$ as $K'_{D\Delta}(t_f,t_i)f$.  Then we have a formula 
\begin{equation}  \label{2.13}
     K'_{D\Delta}(t_f,t_i)f  = e^{i\psi (t_f,\cdot)}K_{D\Delta}(t_f,t_i)\left(e^{-i\psi (t_i,\cdot)}f\right)
\end{equation}
for all $f \in (L^2)^N$ and so have the same formula for $K_{D}(t_f,t_i)f$.
}
\par
Let $M$ and $a$ be  positive integers.  We introduce  the weighted Sobolev spaces
 $B^a_M(\mathbb{R}^d)^N  := \{f \in L^2(\mathbb{R}^d)^N;\
\|f\|_{B^a_M} := \|f\| + \sum_{|\alpha| = aM} \|x^{\alpha}f\| +
\sum_{|\alpha| = a}
\|\partial_x^{\alpha}f\| 
\\
< \infty\}$.  Let
$B^{-a}_M(\mathbb{R}^d)^N$ denote their dual spaces and set 
$B^0_M(\mathbb{R}^d)^N: = L^2(\mathbb{R}^d)^N$.  \par
\noindent{\bf Theorem 2.B}.  {\it  Besides the assumptions of Theorem 2.A we assume the following: (1)  We have
\begin{equation} \label{2.14}
|\partial_x^{\alpha}A_j(t,x)| \leq C_{\alpha},
     \quad |\alpha| \geq 1
\end{equation}
in $I_T \times \bR^d$ for $j = 1,2,\dotsc,d$.  (2)   There exists an integer $M \geq  1$ such that
\begin{equation} \label{2.15}
|\partial_x^{\alpha}\partial_tA_j(t,x)| \leq C_{\alpha}<x>^M
\end{equation}
for all $\alpha$ in $\domain$.  Then we have: (1) $K_{D\Delta}(t_f,t_i)$ on $\Sspace^N$ can be extended to a bounded operator on $(B^a_M)^N\ (a = 0,1,\dotsc)$.  (2) Let $f \in (B^a_{M+1})^N$ and  $L_0 \geq 0$  an arbitrary constant.  Then, as $|\Delta| \rightarrow 0$ under the assumption \eqref{2.11}, $K_{D\Delta}(t_f,t_i)f$ converges to  $K_{D}(t_f,t_i)f$ in $(B^a_{M+1})^N$ uniformly with respect to  $t_f$ and $t_i$ in $\It$.
}
\begin{rem}
In \cite{Ichinose 2014} we used   $\chi \in C_0^{\infty}(\bR^d)$, i.e. an infinitely differentiable function in $\bR^d$ with compact support, to define $\kddelta(t_f,t_i)f$ by \eqref{2.7} in place of $\chi \in \Sspace(\bR^d)$.  However, the proof of Proposition 3.2 in \cite{Ichinose 2014} or Proposition  3.2 in the present paper assures us that Theorems 2.A  and 2.B above remain true.
\end{rem}
\begin{rem}
In Theorem 2.2 in \cite{Ichinose 2014} we assumed 
\begin{equation} \label{2.16}
|\partial_x^{\alpha}V(t,x)| \leq C_{\alpha}<x>^M,
     \quad |\alpha| \geq 1
 \end{equation}
besides \eqref{2.14} and \eqref{2.15}.  We note that \eqref{2.16} are derived from \eqref{2.8}, \eqref{2.10} and \eqref{2.15}.
\end{rem}
    We will state  Theorems 2.1 and 2.2 as the main results on the Feynman path integral.
   \begin{thm} 
   {\it 
   In Theorems 2.A and 2.B we replace the assumption \eqref{2.9} with \eqref{2.14}, \eqref{2.15} and
     \begin{equation} \label{2.17}
 |\partial_x^{\alpha}\partial_tB_{jk}(t,x)| \leq C_{\alpha}<x>^{-(1 + \delta_{\alpha})},
      \quad  |\alpha| \geq 1
\end{equation}
for $1 \leq j < k \leq d$ in $\domain$ where  $\delta_{\alpha} > 0$ are constants.  Then the same assertions as in Theorems 2.A and 2.B hold respectively.
   }
   \end{thm}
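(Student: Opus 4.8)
The plan is to isolate the single place where hypothesis \eqref{2.9} is used in \cite{Ichinose 2014}, to re-prove the corresponding phase estimate (the analogue of \eqref{3.8}) from \eqref{2.14}, \eqref{2.15} and \eqref{2.17}, and then to quote the rest of the arguments behind Theorems 2.A and 2.B verbatim. The approximate operator $\kddelta(t_f,t_i)$ is the Kumano-go multi-product (cf. \cite{Kumano-go}) of the one-step oscillatory operators attached to the phase $S(t,s;x,\xi,y)$ of \eqref{2.4}, and every conclusion of Theorems 2.A and 2.B is deduced from uniform bounds, for small $|t-s|$, on the mixed derivatives $\partial_x^\alpha\partial_\xi^\beta\partial_y^\gamma\bigl(S(t,s;x,\xi,y)-(x-y)\cdot\xi\bigr)$ together with their time-derivatives. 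Since $E$ and $V$ are treated exactly as before via \eqref{2.8}, the task reduces to controlling the vector-potential part $\Phi(t,s;x,y):=(x-y)\cdot\int_0^1 A(t-\theta\rho,x-\theta(x-y))\,d\theta$ of the phase, and its time-derivative, under the new hypotheses.

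I would settle the boundedness assertions, part (1) of each theorem, first and directly. Since $S$ is affine in $\xi$, all $\xi$-derivatives of $S-(x-y)\cdot\xi$ are constant, and the only genuine coupling of $x$ and $y$ comes from $\Phi$; differentiating $\Phi$ once or more in $(x,y)$ produces $\theta$-integrals of $\partial_x^\gamma A$ with $|\gamma|\ge 1$, all bounded by \eqref{2.14}. Hence each one-step operator is a Fourier integral operator with uniformly bounded phase derivatives, bounded on $(L^2)^N$, and \eqref{2.15} supplies the bounds on $\partial_t\Phi$ needed to pass to the weighted spaces $(B^a_M)^N$. It is worth recording that \eqref{2.14} already forces $B_{jk}$ and all its spatial derivatives to be bounded, so what is really surrendered in passing from \eqref{2.9} to \eqref{2.14} is only the spatial \emph{decay} of $\partial_x^\alpha B_{jk}$; that decay is never used for mere boundedness, so the proofs of part (1) carry over unchanged.

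The heart of the matter, and the step I expect to be the main obstacle, is the convergence of the multi-product as $|\Delta|\to 0$ and the consequent assertions (2) and (3). Under \eqref{2.9} the needed smallness of the associativity defect of the phases was drawn from the spatial decay of $\partial_x^\alpha B_{jk}$, the crux being that our paths run at arbitrary, even infinite, speed, so the spatial excursions entering the magnetic phase are not controlled by the time-step $\rho$. Lacking that decay, I would return to the device of Theorem 3.7 in \cite{Ichinose 1997} and keep the full \emph{phase-space} oscillatory integral \eqref{2.7} rather than reducing it by stationary phase: integration by parts in the momenta $\xi^{(j)}$ turns each large spatial excursion into decay of the amplitude, and, because $\Phi$ is affine in $x-y$ to leading order, the magnetic field enters the resulting uniform bounds only through the bounded quantity $\partial_x A$; this is what yields \eqref{3.8}, whence the multi-product converges in $(L^2)^N$ under \eqref{2.11} exactly as in the proof of Theorem 2.1 of \cite{Ichinose 2014}. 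The time-dependence of the field must be controlled separately: on differentiating the limit in $t$ to verify the Dirac equation \eqref{1.1} of assertion (3) and to obtain the weighted convergence of Theorem 2.B, one meets $\partial_t\Phi$ and, after the same integration by parts, its magnetic curl $\partial_t B_{jk}$, which are absorbed by \eqref{2.15} and \eqref{2.17} respectively; the single lost weight here is exactly why Theorem 2.B(2) is stated on $(B^a_{M+1})^N$.

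Once the analogue of \eqref{3.8} is established, the remaining assertions follow word for word from \cite{Ichinose 2014}: the independence of the limit $K_D(t_f,t_i)$ from $\chi$ and from $L_0$, the regularity $K_D f\in{\cal E}^0_{t_f}(\It;(L^2)^N)$ together with \eqref{1.1} and $u(t_i)=f$, and the gauge-covariance \eqref{2.13}; none of these invokes \eqref{2.9} beyond the phase estimate just re-established. The delicate quantitative point, on which I would concentrate, is the uniformity in the number $\nu$ of subdivisions of the integration-by-parts bounds under the constraint \eqref{2.11}: this uniformity is precisely what keeps the accumulated magnetic contribution $o(1)$ as $|\Delta|\to 0$ and so drives the convergence.
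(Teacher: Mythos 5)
Your overall architecture --- isolate where \eqref{2.9} enters, re-prove the one-step stability estimate, then quote the rest of \cite{Ichinose 2014} verbatim --- is indeed the paper's architecture, and you correctly locate the danger in the magnetic part of the phase together with the arbitrary-speed paths. But the mechanism you substitute for \eqref{2.9} is not the paper's, and it has a genuine gap. Your central claims --- that for the $L^2$-theory ``the magnetic field enters the resulting uniform bounds only through the bounded quantity $\partial_x A$,'' that spatial decay ``is never used for mere boundedness,'' and that \eqref{2.17} is consumed only when one differentiates in $t$ to verify \eqref{1.1} --- are all incorrect. In the computation of $G_\epsilon(t,s)^*G_\epsilon(t,s)$ (see \eqref{3.5.2}), the magnetic field enters the amplitude through the function $\Psi$ of \eqref{3.1}, whose third term carries the factor $(y_k-z_k)$ contracted against line-averages of $B_{jk}$. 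Under \eqref{2.14} alone, $B_{jk}$ and its spatial derivatives are merely \emph{bounded}, so the derivatives of this term grow linearly in $|y-z|$, and the Calder\'on--Vaillancourt step that yields the stability bound \eqref{3.8} fails; note also that even your preliminary claim that all $(x,y)$-derivatives of the magnetic phase are bounded by \eqref{2.14} is false, since each differentiation leaves a factor $(x-y)$ multiplying derivatives of $A$. This growth in the \emph{phase} is precisely why \eqref{2.9} resisted generalization since \cite{Ichinose 1999}: some decay is indispensable already for assertions (1) and (2), and in Theorem 2.1 it is supplied by \eqref{2.17}, not deferred to assertion (3). Your proposed surrogate --- integration by parts in the momenta $\xi^{(j)}$ --- gains decay in the conjugate variables at the cost of bounded factors, but it cannot neutralize the term $(y_k-z_k)B_{jk}$ sitting in the exponent; if your route worked, the theorem would hold without \eqref{2.17} at all, which contradicts the stated hypotheses and the known obstruction.

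The idea you are missing is Lemma 3.1, which is the actual content of ``returning to the original idea of Theorem 3.7 in \cite{Ichinose 1997}'' that you invoke by name but not in substance. By a Stokes-type argument (the flux through the $2$-plane $\Lambda$, as in Lemma 3.2 of \cite{Ichinose 1997}), the \emph{equal-time} contraction $\sum_{j=1}^d(x_j-z_j)\sum_{k=1}^d(y_k-z_k)\int_0^1\!\int_0^1\sigma_1 B_{jk}(s,\,y+\sigma_1(z-y)+\sigma_1\sigma_2(x-z))\,d\sigma_1 d\sigma_2$ vanishes identically. Subtracting it replaces $\Psi$ by $\Psi'$ of \eqref{3.1.1}, in which the dangerous $B_{jk}$-term becomes a term carrying a factor $(t-s)$ and the time-derivative $\partial_t B_{jk}$; the decay hypothesis \eqref{2.17}, fed through Lemma 3.5 of \cite{Ichinose 1997}, then absorbs the linear factor $(y_k-z_k)$ and yields the uniform derivative bounds \eqref{3.2}. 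Since $(x-z)\cdot\Psi=(x-z)\cdot\Psi'$ by \eqref{3.1.2}, one may use $\Psi'$ in \eqref{3.5.2}, after which the proofs of Theorems 2.1 and 2.2 of \cite{Ichinose 2014} apply essentially verbatim --- which is the part of your plan that is sound. Without this algebraic cancellation and the consequent transfer of the decay requirement from $B_{jk}$ to $\partial_t B_{jk}$, your argument does not close.
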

   \begin{rem}
   If $A_j\ (j= 1,2,\dots,d)$ satisfying \eqref{2.14} are independent of $t \in \It$, \eqref{2.15} holds and \eqref{2.17} follows from \eqref{2.10}.  Hence we can see that Theorem 2.1 gives new results.  The assumption \eqref{2.9} hasn't been able to be generalized for a long time since \cite{Ichinose 1999} in 1999.
   \end{rem}
   We will consider the Feynman path integral in the $L^2$ space.
\begin{thm} 
We suppose the  assumptions of Theorem 2.A or make  in Theorem 2.A the same replacement of the assumption \eqref{2.9} as in Theorem 2.1.
   For time-divisions $\Delta = \{\tau_j\}_{j=1}^{\nu -1}$ we set
\begin{equation} \label{2.18}
 \sigma(\Delta) := \sum_{j=0}^{\nu-1}(\tau_{j+1}-\tau_j)^2.
     \end{equation} 
Then we obtain: (1)  Under the assumption $\sigma(\Delta) \leq 1$ we have
\begin{equation}  \label{2.19}
    \Vert K_{D\Delta}(t_f,t_i)f \Vert \leq e^{K_0\sigma(\Delta)}\Vert f \Vert
    \end{equation}
for all $t_i, t_f$ in $\It$ with a constant $K_0 \geq 0$.  (2)  Let $f \in (L^2)^N$.  
  Then, as $\sigma(\Delta) \rightarrow 0$,
$K_{D\Delta}(t_f,t_i)f$ converges to the Feynman path integral $K_{D}(t_f,t_i)f$  in $(L^2)^N$ 
    uniformly with respect to $t_f $ and $t_i $ in $\It$.

\end{thm}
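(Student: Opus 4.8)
The plan is to exploit the composition (semigroup-type) structure of $\kddelta$ and reduce everything to a single sharp one-step $(L^2)^N$-estimate. Carrying out the $x^{(j)}$- and $\xi^{(j)}$-integrations in \eqref{2.7} one index at a time and using the product structure \eqref{2.6} of the phase, one obtains the factorization
\[
\kddelta(t_f,t_i) = C(\tau_\nu,\tau_{\nu-1})\,C(\tau_{\nu-1},\tau_{\nu-2})\cdots C(\tau_1,\tau_0),
\]
where $C(t,s)$ is the elementary Fourier integral operator with phase $S(t,s;x,\xi,y)$ from \eqref{2.4} and $\tau_0=t_i$, $\tau_\nu=t_f$; this is the same factorization already underlying Theorems 2.A and 2.B. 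The whole theorem then hinges on the one-step bound
\[
\|C(t,s)f\| \le \bigl(1 + K_0(t-s)^2\bigr)\|f\|,\qquad f\in(L^2)^N,\ |t-s|\le 1,
\]
which is essentially the content of the estimate the author records as \eqref{3.8}.

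The step I expect to be the main obstacle is showing that the exponent here is $2$ rather than $1$. The matrix-valued factor $e^{-i(t-s)(c\alphahat\cdot\xi+\betahat mc^2)}$ in the symbol of $C(t,s)$ is unitary, since $c\alphahat\cdot\xi+\betahat mc^2$ is Hermitian, and the $A$- and $V$-contributions in \eqref{2.4} enter only through unit-modulus scalar phases; thus the leading part of $C(t,s)$ is $(L^2)^N$-unitary and $C(t,s)^*C(t,s)-I$ has no zeroth-order part, the gauge factor from $A$ cancelling in the product. The gain of a full power of $\rho=t-s$ then comes from the $\rho$-dependence of the remaining phases: every $x$- or $\xi$-derivative of $e^{-i\rho(\,\cdot\,)}$ carries a factor $\rho$, so in the symbol-calculus expansion of $C(t,s)^*C(t,s)$ the first correction to $I$ is already of order $\rho^2$. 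Applying the Calder\'on--Vaillancourt theorem to this remainder gives $\|C(t,s)^*C(t,s)-I\|\le K\rho^2$ uniformly for $t,s\in\It$, and taking square roots yields the one-step bound.

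Granting this, part (1) is immediate: by submultiplicativity of the operator norm and the elementary inequality $1+x\le e^x$,
\[
\|\kddelta(t_f,t_i)f\| \le \prod_{j=0}^{\nu-1}\bigl(1+K_0(\tau_{j+1}-\tau_j)^2\bigr)\|f\| \le \exp\Bigl(K_0\sum_{j=0}^{\nu-1}(\tau_{j+1}-\tau_j)^2\Bigr)\|f\| = e^{K_0\sigma(\Delta)}\|f\|,
\]
the hypothesis $\sigma(\Delta)\le 1$ ensuring $(\tau_{j+1}-\tau_j)^2\le 1$ so that the one-step bound applies on every factor. It is exactly the quadratic, rather than linear, dependence on $|\tau_{j+1}-\tau_j|$ that produces $\sigma(\Delta)$ in the exponent in place of the total variation $\sum_j|\tau_{j+1}-\tau_j|$ of Theorem 2.A, and this is what opens the regime of countably many turns, where $\sigma(\Delta)\to 0$ while the total variation may remain unbounded.

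For part (2) I would run the convergence machinery of Theorem 2.1 in \cite{Ichinose 2014}, now fed by this estimate. The uniform bound $\|\kddelta(t_f,t_i)\|\le e^{K_0}$ from part (1), valid for every $\Delta$ with $\sigma(\Delta)\le 1$ and all $t_i,t_f\in\It$, supplies the stability needed to pass to the limit. I would first prove convergence on the dense subspace $\Sspace^N$ by comparing $\kddelta$ with the Dirac propagator $K_D(t_f,t_i)$ already produced in Theorem 2.A: since $C(t,s)$ agrees with $K_D(t,s)$ to first order in $t-s$, there is a local consistency estimate $\|C(t,s)g-K_D(t,s)g\|\le K(t-s)^2\,p(g)$ with $p$ a fixed Schwartz seminorm. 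Telescoping $\prod_j C(\tau_{j+1},\tau_j)-K_D(t_f,t_i)$ against the evolution property $K_D(t_f,t_i)=\prod_j K_D(\tau_{j+1},\tau_j)$, and using the stability bound together with the uniform $\Sspace^N$-boundedness of $K_D(\tau_k,t_i)$ (which follows from the regularity of the coefficients assumed in Theorem 2.A), one gets
\[
\|\kddelta(t_f,t_i)f-K_D(t_f,t_i)f\| \le C\,p(f)\sum_{j=0}^{\nu-1}(\tau_{j+1}-\tau_j)^2 = C\,p(f)\,\sigma(\Delta)\longrightarrow 0.
\]
Finally, the uniform operator bound and the density of $\Sspace^N$ in $(L^2)^N$ promote this to convergence for every $f\in(L^2)^N$ by the standard $\epsilon/3$ argument, with uniformity in $t_i,t_f\in\It$ inherited from the uniformity of all the estimates; since $\sigma(\Delta)\to 0$ forces $|\Delta|\to 0$, the limit coincides with the Feynman path integral $K_D$ of Theorem 2.A.
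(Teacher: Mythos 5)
Your skeleton coincides with the paper's proof: the factorization $K_{D\Delta}(t_f,t_i)f=G(t_f,\tau_{\nu-1})G(\tau_{\nu-1},\tau_{\nu-2})\cdots G(\tau_1,t_i)f$ is exactly (3.11), your one-step bound is (3.8) of Proposition 3.4, and Remark 3.5 records precisely your mechanism, $\Vert G(t,s)f\Vert^2=(f,f)+(t-s)^2\bigl(P(t,s;X,D_X,X')f,f\bigr)$ with $P$ bounded, the quadratic gain coming from the fact that the pointwise-unitary matrix amplitude forces any correction to $I$ in $G^*G$ to carry a derivative on each of the two conjugate exponential factors, each contributing a factor $\rho=t-s$; part (1) then follows by the same product argument $\prod_j e^{K_0(\tau_{j+1}-\tau_j)^2}=e^{K_0\sigma(\Delta)}$. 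However, your justification of the one-step estimate has a genuine gap: under the assumptions of Theorem 2.A only the field strengths $E$, $B_{jk}$ are controlled, the potential $(V,A)$ may grow, and so neither the amplitude of $C(t,s)$ nor that of $C(t,s)^*C(t,s)$ is a bounded symbol to which Calder\'on--Vaillancourt applies. Nor does the $A$-dependence simply ``cancel in the product'': it survives in the functions $\Psi_j$ of (3.1) through the term $\int_0^1A_j(s,z+\theta(x-z))\,d\theta$. The paper repairs this in two steps you omit: it first passes, via the gauge formula (2.13) and Lemma 6.1 of Ichinose (1999), to a gauge-equivalent potential satisfying (2.14)--(2.16), and transfers the conclusion back at the end; and, under the relaxed field assumption (2.17) of Theorem 2.1, it must further replace $\Psi$ by the function $\Psi'$ of (3.1.1) via Lemma 3.1, since it is $\Psi'$, not $\Psi$, that obeys the bounded-derivative estimates (3.2) entering the representation (3.5.2) of $G^*G$.

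In part (2) you telescope against $K_D$ itself, invoking its evolution property $K_D(t_f,t_i)=\prod_jK_D(\tau_{j+1},\tau_j)$, its uniform boundedness on ${\cal S}^N$, and an $O((t-s)^2)$ consistency estimate with $C(t,s)$. None of these is available at this point: Theorem 2.A produces $K_D$ only as a limit along divisions obeying the total-variation constraint (2.11) and supplies neither a flow property nor smoothing estimates, so as written this step is circular. The paper instead telescopes against the PDE fundamental solution $U(t,s)$ of Proposition 3.3, which has the flow property by uniqueness and the weighted-Sobolev bounds (3.7); the consistency estimate is (3.10), $\Vert G(t,s)f-U(t,s)f\Vert_{B^a_{M+1}}\le C_a(t-s)^2\Vert f\Vert_{B^{a+2}_{M+1}}$, applied on the dense class $(B^2_{M+1})^N$ in the telescoping identity (3.12) and the resulting bound (3.13), with the $\epsilon$-density argument (3.14) promoting this to all of $(L^2)^N$; only after proving $K_{D\Delta}f\to U(t_f,t_i)f$ uniformly is the limit identified with $K_D$. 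Note also that your closing remark is not quite enough for that identification: $\sigma(\Delta)\to0$ does imply $|\Delta|\to0$, but Theorem 2.A additionally requires (2.11), and escaping (2.11) is the very point of Theorem 2.2; the identification works because both limits agree along any common subfamily, e.g.\ monotone refinements of $[t_i,t_f]$. With $U$ substituted for $K_D$ in your part (2) and the gauge reduction inserted, your argument becomes the paper's proof.
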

  %
  %
  \begin{rem}
  Theorem 2.2 gives a generalization of Theorem 2.A and a part of Theorem 2.1 because of 
\begin{equation*} 
 \sigma(\Delta) = \sum_{j=0}^{\nu-1}(\tau_{j+1}-\tau_j)^2 \leq |\Delta| \sum_{j=0}^{\nu-1}|\tau_{j+1}-\tau_j|.
     \end{equation*} 
  \end{rem}
 The corollary below follows from (2)  of Theorem 2.2.
 \begin{cor}
 Consider time-divisions $\Delta(n) := \{\tau_j^{(n)}\}_{j=1}^{\nu(n)-1}\ (n =1,2, \dots)$ such that $\lim_{n\rightarrow \infty}\sigma(\Delta(n)) = 0$  and  for each $n$ there exist $j_k$ and $j'_k\ (k = 1,2,\dots,n)$ satisfying  $\tau_{j_k}^{(n)} = T$ and $\tau_{j'_k}^{(n)}= -T$.   Let    $f \in (L^2)^N$.  Then, under the assumptions of Theorem 2.2 we have
\begin{equation}  \label{2.20}
     \lim_{n\rightarrow \infty}K_{D\Delta(n)}(t_f,t_i)f  = K_{D}(t_f,t_i)f
     \end{equation}
in  $ (L^2)^N$  uniformly with respect to $t_f$ and $t_i$ in $\It$.
 \end{cor}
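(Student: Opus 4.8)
The plan is to read off the Corollary directly from part (2) of Theorem 2.2, for which the single relevant hypothesis on the sequence is $\lim_{n\to\infty}\sigma(\Delta(n)) = 0$. The existence of indices $j_k$ and $j'_k$ with $\tau_{j_k}^{(n)} = T$ and $\tau_{j'_k}^{(n)} = -T$ for $k = 1,\dots,n$ is not actually invoked in the convergence argument; its role is interpretive, namely to guarantee that the piecewise linear paths $\qdelta$ reach both extreme times $\pm T$ at least $n$ times, so that letting $n\to\infty$ realizes the sum over histories with a countably infinite number of turns announced in the introduction.

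First I would restate part (2) of Theorem 2.2 in the form that fits the present need: uniform convergence in $t_f, t_i$ means that $\Phi(\Delta) := \sup_{t_f,t_i\in\It}\Vert K_{D\Delta}(t_f,t_i)f - K_D(t_f,t_i)f\Vert \to 0$ as $\sigma(\Delta)\to 0$, i.e.\ for every $\varepsilon > 0$ there is $\delta > 0$ with $\sigma(\Delta) < \delta \Rightarrow \Phi(\Delta) < \varepsilon$. Since $\sigma(\Delta(n)) \to 0$ by hypothesis, I would choose $N$ with $\sigma(\Delta(n)) < \delta$ for all $n \geq N$; then $\Phi(\Delta(n)) < \varepsilon$ for $n \geq N$. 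As $\varepsilon$ was arbitrary, $\Phi(\Delta(n)) \to 0$, which is precisely \eqref{2.20} holding uniformly in $t_f$ and $t_i$.

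The point worth stressing, and the reason the Corollary is drawn from Theorem 2.2 rather than from Theorems 2.A or 2.1, is that along such sequences the total length $\sum_{j=0}^{\nu(n)-1}|\tau_{j+1}^{(n)} - \tau_j^{(n)}|$ may be made to grow without bound: each genuine excursion between $-T$ and $T$ contributes at least $2T$ to this sum by the triangle inequality, so a division bouncing between the two endpoints violates the uniform bound \eqref{2.11} for every fixed $L_0$ once $n$ is large. Hence the convergence criteria of Theorems 2.A and 2.1, which are conditioned on \eqref{2.11}, are unavailable, whereas the sum-of-squares control $\sigma(\Delta)\to 0$ of Theorem 2.2 permits unbounded total variation while still forcing $|\Delta|\to 0$, since $|\Delta|^2 \leq \sigma(\Delta)$. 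I therefore expect no genuine obstacle: the entire substance of the statement resides in Theorem 2.2, and the proof is the specialization of its uniform convergence to the prescribed sequence, the only subtlety being the conceptual observation that \eqref{2.11} fails so that the earlier results do not suffice.
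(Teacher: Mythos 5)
Your proposal is correct and matches the paper's route exactly: the paper offers no separate argument beyond the remark that the Corollary ``follows from (2) of Theorem 2.2,'' which is precisely your specialization of the uniform-in-$(t_f,t_i)$ convergence as $\sigma(\Delta)\to 0$ to the given sequence $\Delta(n)$. Your side observations are also consonant with the paper: the role of the visits to $\pm T$ is indeed only interpretive (cf.\ Remark 2.4 on countably many turns), and the failure of the bound \eqref{2.11} along such sequences is exactly what the paper's Example 2.1 illustrates with $\nu(n)=(2n+1)n^2$.
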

 \begin{exmp}
 We can easily construct time-divisions $\Delta(n)\ (n = 1,2,\dots)$ satisfying the properties stated in Corollary 2.3.
In fact, let $t_i < t_f$ and take $\nu(n) = (2n+1)n^2$.  We can easily determine 
$\Delta(n) =  
\{\tau_j \}_{j=1}^{\nu(n)-1}$ such that
\begin{align*}  
  &  t_i < \tau_1 <\tau_2 <\dotsc < \tau_{j_1}=T >  \tau_{j_1+1} > \dotsc > \tau_{j'_1}=-T < \tau_{j'_1+1} < \\
    &  \dotsc < \tau_{j_2} =T >  \tau_{j_2+1} > \dotsc > \tau_{j'_n} =-T< \tau_{j'_n+1}  <\dotsc < \tau_{\nu(n)-1} < t_f
     \end{align*}
 and $|\tau_{j+1} - \tau_j| \leq 2T/n^2$.  For example, we have only to take $j_k =(2k-1)n^2$ and $j'_k = 2kn^2$ for $k = 1,2,\dots,n$. Then we have 
\begin{equation*} 
 \sigma(\Delta(n)) = \sum_{j=0}^{\nu(n)-1}(\tau_{j+1}-\tau_j)^2 \leq (2n+1)n^2\left(\frac{2T}{n^2}\right)^2,
     \end{equation*} 
which tends to zero as $n \rightarrow \infty$.
 \end{exmp}
 \begin{rem}
 The left-hand side of \eqref{2.20} gives the Feynman path integral in the form of sum-over-histories over all paths of one electron that goes forward and backward in time with a countably infinite number of turns.
 \end{rem}
 
 %
 %
 %
 \section{Proofs of Theorems 2.1 and 2.2}
 Let $t$ and $s$ be in $\It$.
 We set
\begin{align} \label{3.1}
&  \Psi_j(t,s;x,y,z) := - \int_0^1A_j(s,z+ \theta(x-z))d\theta  \notag \\
& + (t-s)\int_0^1\int_0^1\sigma_1 E_j(t-\sigma_1(t-s),y+\sigma_1(z-y)+\sigma_1\sigma_2(x-z))d\sigma_1d\sigma_2 \notag \\
& + \sum_{k=1}^d(y_k-z_k)\int_0^1\int_0^1\sigma_1B_{jk}(t-\sigma_1(t-s),y+\sigma_1(z-y)+\sigma_1\sigma_2(x-z))d\sigma_1d\sigma_2
\end{align}
as in (3.7) of  \cite{Ichinose 2014} and 
\begin{align} \label{3.1.1}
&  \Psi'_j(t,s;x,y,z) := - \int_0^1A_j(s,z+ \theta(x-z))d\theta  \notag \\
& + (t-s)\int_0^1\int_0^1\sigma_1 E_j(t-\sigma_1(t-s),y+\sigma_1(z-y)+\sigma_1\sigma_2(x-z))d\sigma_1d\sigma_2 \notag \\
&  +  (t-s)\int_0^1 d\theta \sum_{k=1}^d(y_k-z_k)\int_0^1\int_0^1\sigma_1(1-\sigma_1)\frac{\partial B_{jk}}{\partial t}(s + \theta(t-s)(1-\sigma_1), \notag\\
&  y+\sigma_1(z-y)+\sigma_1\sigma_2(x-z))d\sigma_1d\sigma_2
\end{align}
for $j = 1,2,\dots,d$.

 \begin{lem}
We have
\begin{equation} \label{3.1.2}
 (x - z)\cdot \Psi(t,s;x,y,z) =  (x - z)\cdot \Psi'(t,s;x,y,z).
\end{equation}
Under the assumptions  \eqref{2.8}, \eqref{2.14} and \eqref{2.17} we have
\begin{equation} \label{3.2}
            |\partial_x^{\alpha}\partial_y^{\beta}\partial_{z}^{\gamma}
            \Psi'_j(t,s;x,y,z)|  \leq C_{\alpha,\beta,\gamma},
\ |\alpha + \beta  + \gamma| \geq 1
\end{equation}
in $\It^2\times \bR^{3d}$ for $j = 1,2,\dots,d$.
 \end{lem}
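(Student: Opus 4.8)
The plan is to establish \eqref{3.1.2} as an exact algebraic identity and \eqref{3.2} as a symbol-type estimate, treating the two separately. For \eqref{3.1.2} I would begin by noting that the first two terms of \eqref{3.1} and \eqref{3.1.1} — the term containing $A_j$ and the term containing $E_j$ — coincide verbatim, so that $\Psi_j-\Psi'_j$ is exactly the difference of the two magnetic terms. These differ only through their time arguments: in \eqref{3.1} the tensor $B_{jk}$ is evaluated at $t-\sigma_1(t-s)$, whereas in \eqref{3.1.1} the derivative $\partial_t B_{jk}$ is evaluated along the segment $\theta\mapsto s+\theta(t-s)(1-\sigma_1)$ and integrated in $\theta$. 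The natural device is the fundamental theorem of calculus in the time variable: for fixed $\sigma_1,\sigma_2$ and fixed spatial argument, $(t-s)(1-\sigma_1)\int_0^1\partial_t B_{jk}(s+\theta(t-s)(1-\sigma_1),\cdot)\,d\theta$ equals $B_{jk}(t-\sigma_1(t-s),\cdot)-B_{jk}(s,\cdot)$. Performing the $\theta$-integration in \eqref{3.1.1} thus rewrites the magnetic term of $\Psi'_j$ as the magnetic term of $\Psi_j$ minus a residual term $R_j$ in which $B_{jk}$ is frozen at time $s$.

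It then remains to prove that $(x-z)\cdot R$ vanishes, and this is the step I expect to be the main obstacle. The contraction is $\sum_{j,k}(x_j-z_j)(y_k-z_k)\int_0^1\int_0^1\sigma_1 B_{jk}(s,y+\sigma_1(z-y)+\sigma_1\sigma_2(x-z))\,d\sigma_1 d\sigma_2$, and the idea is to exploit the antisymmetry $B_{jk}=-B_{kj}$ together with integration by parts in $\sigma_1$ and $\sigma_2$. Since the spatial argument has $\sigma_2$-derivative equal to $\sigma_1(x-z)$ and $\sigma_1$-derivative equal to $(z-y)+\sigma_2(x-z)$, the directional derivatives $(x-z)\cdot\nabla$ and $(y-z)\cdot\nabla$ acting on $B_{jk}(s,\cdot)$ can be traded for $\sigma$-derivatives; integrating by parts collapses the double integral to contributions supported on the edges of the triangle with vertices $x,y,z$, and the task is to verify that these cancel. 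I would carry this out componentwise, keeping careful track of the weights generated by the successive changes of variable.

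For the estimate \eqref{3.2} I would differentiate $\Psi'_j$ under the integral sign and bound each term by \eqref{2.8}, \eqref{2.14} and \eqref{2.17}. Each derivative $\partial_x^{\alpha}\partial_y^{\beta}\partial_z^{\gamma}$ lands either on the explicit factor $(y_k-z_k)$ or, through the chain rule, on $A_j$, $E_j$ or $\partial_t B_{jk}$, the inner derivatives of the spatial argument $y+\sigma_1(z-y)+\sigma_1\sigma_2(x-z)$ contributing only bounded coefficients such as $\sigma_1\sigma_2$, $1-\sigma_1$ and $\sigma_1(1-\sigma_2)$. Because \eqref{3.2} is asserted only for $|\alpha+\beta+\gamma|\geq1$, the genuinely unbounded zeroth-order pieces of the $A$- and $E$-terms never occur, so \eqref{2.8} and \eqref{2.14} apply at once. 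The delicate point is the magnetic term: when no derivative falls on $(y_k-z_k)$ one must absorb the growth of $|y-z|$ against the decay $<x>^{-(1+\delta_{\alpha})}$ of the spatial derivatives of $\partial_t B_{jk}$ supplied by \eqref{2.17}. Here I would combine the weight $\sigma_1(1-\sigma_1)$ with a one-dimensional estimate of the type $|y-z|\int_0^1<\cdot>^{-(1+\delta)}d\sigma\leq C$ along the relevant segment, exactly as the decay \eqref{2.9} was exploited for $\Psi_j$ in the preceding paper, the strict positivity of each $\delta_{\alpha}$ guaranteeing integrability; the remaining case, in which a derivative does fall on $(y_k-z_k)$, leaves $\partial_t B_{jk}$ at spatial order zero and is controlled by its boundedness, which follows from its continuity and the integrable decay of its gradient.
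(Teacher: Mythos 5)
Your reduction of \eqref{3.1.2} via the fundamental theorem of calculus is correct and agrees with the skeleton of the paper's argument: carrying out the $\theta$-integration in \eqref{3.1.1} turns the magnetic term of $\Psi'_j$ into the magnetic term of $\Psi_j$ minus the frozen-time residual $R_j=\sum_{k}(y_k-z_k)\int_0^1\!\int_0^1\sigma_1B_{jk}(s,y+\sigma_1(z-y)+\sigma_1\sigma_2(x-z))\,d\sigma_1d\sigma_2$, so that everything hinges on $(x-z)\cdot R=0$; the paper performs exactly this subtraction. Your treatment of \eqref{3.2} also coincides with the paper's proof: the $A$- and $E$-terms are disposed of by \eqref{2.14} and \eqref{2.8} since $|\alpha+\beta+\gamma|\geq 1$, and for the $\partial_tB$-term the growth of $|y-z|$ is absorbed by the decay \eqref{2.17} along the integration segment --- the paper does this by citing Lemma 3.5 of \cite{Ichinose 1997}, which is precisely the one-dimensional estimate you describe.

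The genuine gap is the step you yourself flag as the main obstacle, and the route you propose for it cannot work. The map $(\sigma_1,\sigma_2)\mapsto y+\sigma_1(z-y)+\sigma_1\sigma_2(x-z)$ parametrizes the triangle with vertices $y,z,x$, and the weight $\sigma_1$ is exactly its Jacobian factor; hence $(x-z)\cdot R$ is the flux of the two-form $\sum_{j<k}B_{jk}(s,\cdot)\,dx_j\wedge dx_k$ through that triangle, equivalently (by Stokes) the circulation of $A(s,\cdot)$ around its boundary. This quantity has genuine geometric content: for $d=2$ and a constant field $B_{12}=b$ one computes $(x-z)\cdot R=\tfrac{b}{2}\{(x_1-z_1)(y_2-z_2)-(x_2-z_2)(y_1-z_1)\}$, proportional to the signed area of the triangle. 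Consequently no manipulation by antisymmetry and integration by parts in $\sigma_1,\sigma_2$ can collapse the double integral to cancelling edge terms --- the vanishing is not a formal identity about antisymmetric integrands, and any proof must use input beyond the formula for $R$ itself. This is exactly how the paper proceeds: it does not argue on $R$ directly, but returns to the provenance of $\Psi$, namely the Stokes-theorem computation over the two-dimensional surface $\Lambda$ in space-time determined in (3.8) of \cite{Ichinose 2014}, and extracts the vanishing of the frozen-time term from $\lim_{t\rightarrow s\pm 0}\iint_{\Lambda}d(A\cdot dx-Vdt)=0$ combined with the proof of Lemma 3.2 in \cite{Ichinose 1997}, i.e.\ from the fact that $(x-z)\cdot\Psi$ arises as the difference of line integrals of $A\cdot dx-Vdt$ over a closed space-time loop. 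Your componentwise integration-by-parts plan therefore has to be abandoned and replaced by this geometric limit argument (or an equivalent use of the loop structure behind $\Psi$); as written, the decisive half of \eqref{3.1.2} is unproved.
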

 \begin{proof}
   Let us return to the proof of Lemma 3.4 in \cite{Ichinose 2014}.  
 Let $\Lambda$ be the 2-dimensional plane in $\domain$ determined in (3.8) of  \cite{Ichinose 2014}. Then we have
\begin{equation*} 
 \lim_{t\rightarrow s\pm 0}\iint_{\Lambda}d(A\cdot dx - Vdt) = 0.
\end{equation*}
 Hence from the proof of Lemma 3.2 in \cite{Ichinose 1997} we can see
\begin{align*} 
& \sum_{j=1}^d(x_j-z_j)\sum_{k=1}^d(y_k-z_k)\int_0^1\int_0^1\sigma_1B_{jk}(s,y  +\sigma_1(z-y)\\
&\qquad  +\sigma_1\sigma_2(x-z))d\sigma_1d\sigma_2 = 0
\end{align*}
for all $(x,y,z) \in \bR^{3d}$.  Consequently, subtracting the coefficient of $x_j-z_j$ in the above from $\Psi_j(t,s;x,y,z)$, we get \eqref{3.1.1} and \eqref{3.1.2}.
\par
It follows from \eqref{2.8} and \eqref{2.14} that the first term and the second one on the right-hand side of \eqref{3.1.1} satisfy \eqref{3.2}.
Applying Lemma 3.5 in \cite{Ichinose 1997} to the third term on the right-hand side of \eqref{3.1.1}, we can see from \eqref{2.17} that the third term satisfies \eqref{3.2} as well.  Thus, the proof is complete.
 \end{proof}
 Now we will prove Theorem 2.1.  Let us define an operator  on $\Sspace^N$ by
\begin{equation}  \label{3.5}
     (G_{\epsilon}(t,s)f)(x) = \iint e^{iS(t,s;x,\xi,y)}f(y)
     \chi(\epsilon \xi)dy\dbar\xi
\end{equation}
for $\epsilon > 0$ in terms of \eqref{2.4} and \eqref{2.5} as in (1.12) of \cite{Ichinose 2014}, where $\chi \in \Sspace(\bR^d)$ such that $\chi(0) = 1$.   Let $G_{\epsilon}(t,s)^*$ denote the formally adjoint operator of $G_{\epsilon}(t,s)$.
We will do use Lemma 3.1.  Then, noting Lemma 3.4 in \cite{Ichinose 2014}, as in the proof of Proposition 3.5 in \cite{Ichinose 2014} we can prove
\begin{align}  \label{3.5.2} 
     & \left(\Gepsilon(t,s)^*\Gepsilon(t,s)f\right) (x) = \iint e^{i(x-z)\cdot\xi} dz\dbar\xi \iint e^{-i\eta\cdot w}  e^{i(t-s)(c\alphahat\cdot \xi + c\alphahat\cdot\Psi' + \betahat mc^2)}      \notag \\
       &  \times  e^{-i(t-s)(c\alphahat\cdot \xi + c\alphahat\cdot\Psi' + \betahat mc^2 -c \alphahat\cdot\eta)}\overline{\chi(\epsilon(\xi + \Psi'))}\chi(\epsilon(\xi + \Psi' - \eta))f(z)dw\dbar\eta
\end{align}
for $f \in \Sspace^N$ with $\Psi' = \Psi'(t,s;x,w+z,z)$, where $\eta \in R^d$ and $w \in R^d$.  Hence, taking account of \eqref{3.2}, we can prove Theorem 2.1 as in the proofs of Theorems 2.1 and 2.2 in \cite{Ichinose 2014}.
\par
	Next we will prove Theorem 2.2.  As proved in Lemma 6.1 of  \cite{Ichinose 1999}, under the assumptions of Theorem 2.A there exists an integer $M \geq 1$ such that we have \eqref{2.14}-\eqref{2.16}.  Therefore, we can see from Proposition 3.6 in \cite{Ichinose 2014} that under the assumptions of Theorem 2.2 we have \eqref{2.14}-\eqref{2.16}, and \eqref{3.2} or 
\begin{equation} \label{3.5.3}
            |\partial_x^{\alpha}\partial_y^{\beta}\partial_{z}^{\gamma}
            \Psi_j(t,s;x,y,z)|  \leq C_{\alpha,\beta,\gamma},
\ |\alpha + \beta  + \gamma| \geq 1
\end{equation}
in $\It^2\times \bR^{3d}$ for $j = 1,2,\dots,d$.
\par
Let  $\Gepsilon(t,s)$ be the operator defined  by \eqref{3.5}.
 The following proposition has already been shown in the proof of Proposition 3.2 of \cite{Ichinose 2014}.
\begin{pro}
 Assume \eqref{2.16} and 
\begin{equation*} 
|\partial_x^{\alpha}A_j(t,x)| \leq C_{\alpha}<x>^M
     \quad |\alpha| \geq 1
\end{equation*}
in $\It \times \bR^d$ for $j= 1,2,\dots,d$.  Then, $\bigl\{G_{\epsilon}(t,s)\bigr\}_{0 <\epsilon \leq 1}$ is a bounded family of operators from $\Sspace^N$ into itself and  there exists an operator $G(t,s)$ on $\Sspace^N$ independent of the choice of $\chi$ such that we have
\begin{equation}  \label{3.6}
G(t,s)f = \lim_{\epsilon\rightarrow 0}\Gepsilon(t,s)f
\end{equation}
 in $\Sspace^N$ for all $f \in \Sspace^N$ uniformly with respect to  $t$ and $s$ in $\It$. In particular, we have $G(s,s)f = f$ for all $f \in \Sspace^N$.
\end{pro}
   The following proposition has been stated as Theorem 5.2 of \cite{Ichinose 2014}, that had been proved in \cite{Ichinose 1995}.
   \begin{pro}
   Under the assumptions of Proposition 3.2 consider the Dirac equation \eqref{1.1} with $u(s) = f  \in B^a_{M+1}\ (a = 0,1,2,\dots)$ for $s  \in \It$.
   Then there exists a unique solution $U(t,s)f \in {\cal E}^0_t(\It;B^a_{M+1}) \cap {\cal E}^1_t(\It;
   B^{a-1}_{M+1})$, which
   satisfies
\begin{equation} \label{3.7}
    \Vert U(t,s)f \Vert = \Vert f \Vert, \  \Vert U(t,s)f \Vert_{B^a_{M+1}} \leq C_a(T) \Vert f \Vert_{B^a_{M+1}} \   (a = 1,2,\dotsc) 
  \end{equation}
  for $t$ and $s$ in $\It$.
   \end{pro}
   We have proved \eqref{3.2} or \eqref{3.5.3} under the assumptions of Theorem 2.2.  Hence we can prove the following as in the proof of Theorem 3.3 in \cite{Ichinose 2014}. 
	\begin{pro}
	Under the assumptions of Theorem 2.2   we have:
	  (1)  $G(t,s)$ defined in Proposition 3.2 can be extended to a bounded operator on $(L^2)^N$. (2)  There exists a constant $K_0 \geq 0$ such that 
\begin{equation}  \label{3.8}
\Vert G(t,s)f\Vert \leq e^{K_0(t-s)^2}\Vert f \Vert
\end{equation}
 for all $f \in L^2$ and $t, s \in \It$ with $|t -s| \leq 1$.
	\end{pro}
	
	\begin{rem}
   The inequality \eqref{3.8} above has been yielded directly from (3.16) in the proof of Theorem 3.3 of \cite{Ichinose 2014}.  As in the completely same way, we can prove
\begin{align*}
& \Vert G(t,s)f\Vert^2 = (f,f) + (t-s)^2(P(t,s;X,D_X,X')f,f) \\
& \geq \Vert f \Vert^2 - 2K_0(t-s)^2\Vert f \Vert^2 
\end{align*}
for $t$ and $s$ in $\It$ with $|t -s| \leq 1$.  This shows 
\begin{equation}  \label{3.9}
\Vert G(t,s)f\Vert^2 \geq e^{-4K_0(t-s)^2}\Vert f \Vert^2
\end{equation}
for $t$ and $s$ in $\It$ with $|t -s| \leq 1$ and  $4K_0(t - s)^2 \leq \log 2$, because   $1 - \theta \geq e^{-2\theta}$ holds for $0 \leq \theta \leq \log 2/2$.
	\end{rem}
  Noting Lemma 3.1,  we can prove the following as well as Proposition 3.4, as in the proof of Proposition 5.3 of \cite{Ichinose 2014} .
	\begin{pro}
	Under the assumptions of Theorem 2.2 we have
   \begin{equation} \label{3.10}
   \Vert G(t,s)f - U(t,s)f\Vert_{B^a_{M+1}} \leq C_a(t-s)^2 \Vert f\Vert_{B^{a+2}_{M+1}},\  -T \leq s, t \leq T
   \end{equation}
   for $a = 0,1,2,\dotsc$ and $f \in \Sspace^N$.
   \end{pro}
  Now, let us prove Theorem 2.2.  We take an electromagnetic potential $(V,A)$ satisfying \eqref{2.14}-\eqref{2.16} that induces $E(t,x)$ and $(B_{jk}(t,x))_{1 \leq j < k \leq d}$, as stated in the early part of this section.  For this $(V,A)$ we will prove the assertions (1) and (2) in Theorem 2.2.  The general case can be proved from these results by the use of the gauge transformation \eqref{2.12}, as in the proof of Theorem 2.1 of \cite{Ichinose 2014} on pp.506-507. \par
   For  $f \in \Sspace^N$ we can write \eqref{2.7} by using \eqref{3.5} and Proposition 3.2 as
\begin{align*}  
     K_{D\Delta}(t_f,t_i)f  & = \lim_{\epsilon \rightarrow 0}\Gepsilon(t_f,\tau_{\nu-1})\chi(\epsilon\cdot)\Gepsilon(\tau_{\nu-1},\tau_{\nu-2})\chi(\epsilon\cdot)\cdots \chi(\epsilon\cdot)
\Gepsilon(\tau_{1},t_i)f \\
& = G(t_f,\tau_{\nu-1})G(\tau_{\nu-1},\tau_{\nu-2})\cdots 
G(\tau_{1},t_i)f
\end{align*}
in $\Sspace^N$,  which proves
\begin{equation}   \label{3.11}
     K_{D\Delta}(t_f,t_i)f  = G(t_f,\tau_{\nu-1})G(\tau_{\nu-1},\tau_{\nu-2})\cdots 
G(\tau_{1},t_i)f
\end{equation}
in  $(L^2)^N$ for  $f \in (L^2)^N$ from Proposition 3.4.  Hence, applying  \eqref{3.8} to \eqref{3.11}, we can easily prove \eqref{2.19}  in Theorem 2.2 from \eqref{2.18}. 
Consequently, (1) in Theorem 2.2 has been proved.
\par
  Let $f \in (B^2_{M+1})^N$.  From \eqref{3.11} we can write
\begin{align}  \label{3.12}
   & K_{D\Delta}(t_f,t_i)f - U(t_f,t_i)f =  G(t_f,\tau_{\nu-1})\cdots G(\tau_1,t_i)f -   U(t_f,\tau_{\nu-1})\cdots U(\tau_1,t_i)f
       \notag \\
          & = \sum_{j=0}^{\nu-1}G(t_f,\tau_{\nu-1})\cdots G(\tau_{j+2},\tau_{j+1}) \bigl\{G(\tau_{j+1},\tau_{j}) -U(\tau_{j+1},\tau_{j})\bigr\}U(\tau_{j},t_i)f .
\end{align}
Let $\sigma(\Delta) \leq 1$ and apply Propositions 3.3-3.5 to the last equation in \eqref{3.12}.  Then we have
\begin{align}  \label{3.13}
   & \Vert K_{D\Delta}(t_f,t_i)f - U(t_f,t_i)f \Vert      \notag \\
& \leq \sum_{j=0}^{\nu-1}e^{K_0\sigma(\Delta)}C_0(\tau_{j+1}-\tau_{j})^2\Vert U(\tau_{j},t_i)f\Vert_{B^2_{M+1}} \notag\\
          & \leq C_0'\sigma(\Delta)e^{K_0\sigma(\Delta)}\Vert f\Vert_{B^2_{M+1}}.
\end{align}
 \par
   Let $f \in (L^2)^N$ and  $\sigma(\Delta) \leq 1$.  For an arbitrary constant $\epsilon > 0$   take a function $g \in (B^2_{M+1})^N$ such that $\Vert g - f\Vert < \epsilon$.   Then,
   using \eqref{2.19}, \eqref{3.7} and \eqref{3.13}, we can prove
\begin{align}  \label{3.14}
   & \Vert K_{D\Delta}(t_f,t_i)f - U(t_f,t_i)f \Vert  \leq \Vert K_{D\Delta}(t_f,t_i)g - U(t_f,t_i)g\Vert +  \Vert K_{D\Delta}(t_f,t_i)(f-g)\Vert
    \notag \\
          &  + \Vert  U(t_f,t_i)(f-g)\Vert \leq  C_0'\sigma(\Delta)e^{K_0\sigma(\Delta)}\Vert g\Vert_{B^2_{M+1}}+ e^{K_0\sigma(\Delta)}\Vert g - f\Vert + \Vert g-f\Vert ,
         \end{align}
         which shows 
\begin{equation*}  
\overline{  \lim_{\sigma(\Delta)\rightarrow 0}} \  \Vert K_{D\Delta}(t_f,t_i)f - U(t_f,t_i)f \Vert     \leq 2\epsilon.
\end{equation*}
Consequently we have been able to prove (2) of Theorem 2.2.  Therefore, the proof of Theorem 2.2 has been completed.
\section{Unitarity and Causality}
   In this section we will study the properties of the Feynman path integral $K_D(t_f,t_i)$  determined in Theorems 2.1 and 2.2.  
   \par
   First we will prove  the unitarity of $K_D(t_f,t_i)$ on $(L^2)^N$.   This result gives another proof of the unitarity of the fundamental solution $U(t_f,t_i)$ to \eqref{1.1}  on $(L^2)^N$ because of $U(t_f,t_i) = K_D(t_f,t_i)$ in Theorem 2.2, which is well known in the theory of partial differential equations.  We note that we can prove Theorem 2.2  without the use of the unitarity of $U(t_f,t_i)$.
   \begin{thm}
   Under the assumptions of Theorem 2.2  $K_D(t_f,t_i)$ is unitary on $(L^2)^N$.
   \end{thm}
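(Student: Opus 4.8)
The plan is to prove two things: that $K_D(t_f,t_i)$ is an isometry on $(L^2)^N$, and that it is surjective. Together these give unitarity. The whole argument stays inside the Feynman-integral framework, using only the two-sided norm estimates on $G(t,s)$ and the convergence already established in Theorem 2.2, so that (as the introduction promises) no appeal to the unitarity of the fundamental solution is made.

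For the isometry I would start from the representation \eqref{3.11}, which on $(L^2)^N$ expresses $K_{D\Delta}(t_f,t_i)$ as the product $G(t_f,\tau_{\nu-1})G(\tau_{\nu-1},\tau_{\nu-2})\cdots G(\tau_1,t_i)$. The estimate \eqref{3.8} (equivalently \eqref{2.19}) gives the upper bound, while \eqref{3.9} gives a matching lower bound on each factor: once $\sigma(\Delta)$ is small enough that $4K_0(\tau_{j+1}-\tau_j)^2 \leq \log 2$ holds for every $j$ — which is guaranteed as soon as $|\Delta| \leq \sqrt{\sigma(\Delta)}$ is small — iterating $\Vert G(\tau_{j+1},\tau_j)g\Vert^2 \geq e^{-4K_0(\tau_{j+1}-\tau_j)^2}\Vert g\Vert^2$ through the product yields
\[
e^{-4K_0\sigma(\Delta)}\Vert f\Vert^2 \;\leq\; \Vert K_{D\Delta}(t_f,t_i)f\Vert^2 \;\leq\; e^{2K_0\sigma(\Delta)}\Vert f\Vert^2 .
\]
Letting $\sigma(\Delta)\rightarrow 0$ and invoking the convergence $K_{D\Delta}(t_f,t_i)f \rightarrow K_D(t_f,t_i)f$ in $(L^2)^N$ from (2) of Theorem 2.2, together with continuity of the norm, forces $\Vert K_D(t_f,t_i)f\Vert = \Vert f\Vert$ for all $f\in (L^2)^N$. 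Hence $K_D(t_f,t_i)^*K_D(t_f,t_i) = I$.

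For surjectivity I would establish the composition law $K_D(t_f,t_i) = K_D(t_f,t)K_D(t,t_i)$ for any intermediate $t\in\It$. Restricting to time-divisions $\Delta$ that contain $t$ as a node splits the product \eqref{3.11} at $t$ as $K_{D\Delta}(t_f,t_i) = K_{D\Delta''}(t_f,t)K_{D\Delta'}(t,t_i)$, where $\Delta',\Delta''$ are the induced divisions of $[t_i,t]$ and $[t,t_f]$; since $\sigma(\Delta'),\sigma(\Delta'')\leq \sigma(\Delta)$, I would let $\sigma(\Delta)\rightarrow 0$ and pass to the limit, using the uniform bound \eqref{2.19} so that the strong convergence of Theorem 2.2 propagates through the product. (This identity is of course the evolution-operator relation $U(t_f,t)U(t,t_i)=U(t_f,t_i)$ from the uniqueness in Proposition 3.3, since $K_D=U$ by Theorem 2.2, but the direct limiting argument avoids invoking $U$.) Taking $t=t_i$ and $t=t_f$ and using $K_D(t,t)=I$ then gives $K_D(t_f,t_i)K_D(t_i,t_f) = I = K_D(t_i,t_f)K_D(t_f,t_i)$, so $K_D(t_i,t_f)$ is a two-sided inverse. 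An isometry with an inverse is bijective and norm-preserving, hence unitary.

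The quantitative heart of the proof — and the step I expect to be the real obstacle — is the lower bound \eqref{3.9}, since \eqref{3.8} alone only yields a one-sided contraction-type estimate. Its derivation rests on the quadratic-form identity $\Vert G(t,s)f\Vert^2 = (f,f) + (t-s)^2(P(t,s;X,D_X,X')f,f)$ recorded in the remark after Proposition 3.3, and the delicate point is the bookkeeping of the two exponential rates (the $2K_0$ in the upper estimate against the $4K_0$ in the lower one) together with the side condition $4K_0(\tau_{j+1}-\tau_j)^2\leq\log 2$ that must hold on every factor before \eqref{3.9} can be applied. Verifying that this side condition holds uniformly in $j$ once $\sigma(\Delta)$ is small is immediate from $|\Delta|^2\leq\sigma(\Delta)$, and everything after the isometry is then formal; so the genuine work beyond Theorem 2.2 lies entirely in securing the lower bound and propagating it through the product.
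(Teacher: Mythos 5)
Your proof is correct, and its first half is exactly the paper's: the two-sided bound $e^{-2K_0\sigma(\Delta)}\Vert f\Vert \leq \Vert K_{D\Delta}(t_f,t_i)f\Vert \leq e^{K_0\sigma(\Delta)}\Vert f\Vert$ obtained by pushing \eqref{3.8} and \eqref{3.9} through the product \eqref{3.11} and letting $\sigma(\Delta)\rightarrow 0$ is precisely \eqref{4.1}--\eqref{4.2}, and the side condition $4K_0(\tau_{j+1}-\tau_j)^2\leq \log 2$ is handled the same way you handle it, via $|\Delta|^2\leq\sigma(\Delta)$; the lower bound you flagged as the real obstacle is supplied ready-made by the remark following Proposition 3.4, so there is no new work there. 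Where you genuinely diverge is the passage from isometry to unitarity. The paper computes the adjoint explicitly: from \eqref{4.3}, the line-integral form \eqref{4.4} of the action gives the time-reversal symmetry $-S(t,s;y,\xi,x)=S(s,t;x,\xi,y)$, hence $G_{\epsilon}(t,s)^*=G_{\epsilon}(s,t)$ in \eqref{4.5} and $K_{D\Delta}(t_f,t_i)^*=K_{D\Delta^*}(t_i,t_f)$ in \eqref{4.6}, so $K_D(t_f,t_i)^*$ is itself an isometry and $K^*K=KK^*=\mathrm{Id}$ follows by polarization. You instead manufacture a two-sided inverse through the composition law, splitting \eqref{3.11} at a node and passing to the strong limit under the uniform bound \eqref{2.19}. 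This works --- note that since time-divisions here may be non-monotone, your ``intermediate'' node need not lie between the endpoints, which is exactly what licenses splitting $K_D(t_f,t_f)$ at $t_i$ to get $K_D(t_f,t_i)K_D(t_i,t_f)=\mathrm{Id}$ --- but you should state two small points you only gesture at: why $K_D(t,t)=\mathrm{Id}$ (immediate, since the trivial division gives $K_{D\Delta}(t,t)=G(t,t)$ and $G(s,s)f=f$ by Proposition 3.2, so the $\sigma(\Delta)\rightarrow 0$ limit is the identity), and the routine estimate $\Vert A_{\Delta}B_{\Delta}f-ABf\Vert \leq C\Vert(B_{\Delta}-B)f\Vert+\Vert(A_{\Delta}-A)Bf\Vert$ justifying the limit of the product. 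As for what each route buys: yours stays entirely at the level of norm estimates and limits, never touching the action; the paper's adjoint computation is scarcely longer and yields the sharper, physically meaningful identity $K_D(t_f,t_i)^*=K_D(t_i,t_f)$ (adjoint equals time reversal), which your argument recovers only a posteriori as the inverse.
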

   \begin{proof}
   We have proved \eqref{2.19}  in Theorem 2.2.  In the same way we can prove
\begin{equation*}  
    \Vert K_{D\Delta}(t_f,t_i)f \Vert \geq e^{-2K_0\sigma(\Delta)}\Vert f \Vert
    \end{equation*}
    for small $\sigma(\Delta)$ from \eqref{3.9} and \eqref{3.11}, which shows 
\begin{equation}  \label{4.1}
   e^{-2K_0\sigma(\Delta)}\Vert f \Vert \leq  \Vert K_{D\Delta}(t_f,t_i)f \Vert \leq e^{K_0\sigma(\Delta)}
   \Vert f \Vert.
    \end{equation}
Letting $\sigma(\Delta)$ tend to zero,  we obtain
\begin{equation}  \label{4.2}
    \Vert K_{D}(t_f,t_i)f \Vert = \Vert f \Vert
    \end{equation}
    for $f \in (L^2)^N$ from (2) of Theorem 2.2.  \par
    From \eqref{3.5}
    we can easily have
\begin{equation}  \label{4.3}
     (G_{\epsilon}(t,s)^*f)(x) = \iint e^{-iS(t,s;y,\xi,x)}f(y)
    \overline{ \chi(\epsilon \xi)}dy\dbar\xi
\end{equation}
for $f \in \Sspace^N$.   From \eqref{2.4} and \eqref{2.5} we can write 
\begin{align}  \label{4.4}
 &  S(t,s;x,\xi,y)           \notag\\
 &  = (x - y)\cdot\xi +  \int_{\bq^{t,s}_{x,y}}(A\cdot dx -Vdt)  - (t - s)(c\alphahat\cdot\xi + \betahat mc^2)
 \end{align} 
 as in the proof of (2.3) in \cite{Ichinose 1997},
 where
\begin{equation*} 
\bq^{t,s}_{x,y}: \bq^{t,s}_{x,y}(\theta) = (\theta,\qts(\theta)) \in \domain \ (s \leq \theta \leq t \ \text{or}\  t \leq \theta \leq s).
\end{equation*}
 This gives 
\begin{align*}  
 &  - S(t,s;y,\xi,x)  = -(y - x)\cdot\xi -  \int_{\bq^{t,s}_{y,x}}(A\cdot dx -Vdt)           \notag\\
 &  + (t - s)(c\alphahat\cdot\xi + \betahat mc^2) =  (x-y)\cdot\xi + \int_{\bq^{s,t}_{x,y}}(A\cdot dx -Vdt) \notag\\
 & -(s-t)(c\alphahat\cdot\xi + \betahat mc^2) = S(s,t;x,\xi,y),
 \end{align*} 
 which shows
\begin{equation}  \label{4.5}
     (G_{\epsilon}(t,s)^*f)(x) =  (G_{\epsilon}(s,t)f)(x)
\end{equation}
together with \eqref{4.3}.
Consequently the expression \eqref{3.11} indicates
\begin{equation}  \label{4.6}
     K_{D\Delta}(t_f,t_i)^*f= K_{D\Delta^*}(t_i,t_f)f
\end{equation}
for $f \in (L^2)^N$ with the  time-division $\Delta^*$ corresponding to $\Delta$, which proves
\begin{equation}  \label{4.7}
   \Vert K_{D}(t_f,t_i)^*f \Vert = \Vert K_{D}(t_i,t_f)f \Vert = \Vert f \Vert
    \end{equation}
    from (2) of Theorem 2.2 and \eqref{4.2}.   \par
    The equalities \eqref{4.2} and \eqref{4.7} imply that $K_{D}(t_f,t_i)$ is  unitary  on $(L^2)^N$, as well known.  In fact, it is easily seen from  the polarization identity (cf. p.63 of \cite{Reed-Simon}) that if and only if $F := K_{D}(t_f,t_i)$ is isometric  on $(L^2)^N$,
    $(Ff,Fg) = (f,g)$ are true for all $f$ and $g$ in  $(L^2)^N$, which is equivalent to $F^*F = \text{Identity}$  on $(L^2)^N$.  Since $F^*$ is also isometric, $FF^* = \text{Identity}$  on $(L^2)^N$ is yielded.  Thus, it has been proved that $F = K_{D}(t_f,t_i)$ is unitary on $(L^2)^N$.
   \end{proof}
   Secondly, we will prove that the Feynman path integral $K_D(t_f,t_i)f$ satisfies the causality principles, i.e. has the  speed not exceeding the velocity of light of propagation of disturbances.  This result gives another proof that  every solution to the Dirac equation \eqref{1.1} has the same property, which is also well known  in the theory of partial differential equations.
   For example, see  the 5th problem in \S 5.3 on p.170 of \cite{John}, Theorem 6.10 and its Note 2 on pp.364-365 of \cite{Mizohata} and \S 4 in Chapter IV on p.79 of \cite{Taylor}.  In all of these references, the method of proving the causality  principle is based on  the energy inequality and the introduction of a hypersurface spacelike with respect to the operator defining the equation. Thereby, a  delicate analysis is needed. \par
  Let $\alphahat^{(j)}\ (j = 1,2,\dots,d)$ be the $N\times N$ Hermitian matrix in \eqref{1.1} and $\lambda_k(\xi)\ (k =  1,2,\dots,N)$ the eigenvalue of the matrix $\alphahat\cdot\xi$, which is continuous  on  $\bR^d_{\xi}$.  We set
    \begin{equation} \label{4.8}
  \lambda_{\max} := \max_{j= 1,2,\dotsc,N} \max_{|\xi|=1}\lambda_j(\xi),
\end{equation} 
which is non-negative because of 
    \begin{equation} \label{4.9}
 \lambda_j(s\xi) = s \lambda_j(\xi)\quad(s \in \bR).
\end{equation} 
    \par
    For $f = {}^t(f_1,\dots,f_N) \in (L^2)^N$ we call the union $\cup_{j=1}^N \text{supp} f_j$ of the support of $f_j$ the support of $f$, and write it $\text{supp} f$.  For  a point $a \in \bR^d$ and $R \geq 0$ we write $\{x \in \bR^d;|x - a| \leq R\}$ as $B(a;R)$.  We have the following.
    \begin{thm}
    Let $K_D(t_f,t_i)f$ for $f \in (L^2)^N$ be the Feynman path integral determined in Theorem 2.2.  Then, $K_D(t_f,t_i)f$  has the speed not exceeding   $c \lambda_{\max}$ of propagation of disturbances.  That is,  if $\supp f$ is in $B(a;R)$, then  $\text{supp}\ K_D(t_f,t_i)f$ is  in  $B(a;c \lambda_{\max}|t_f -t_i|+R)$.
     \end{thm}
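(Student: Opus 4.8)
The plan is to reduce the whole statement to a single–step support–propagation estimate for the operator $G(t,s)$ of \eqref{3.5}, and then to assemble the factorization \eqref{3.11} in a way that avoids any loss coming from paths that turn back and forth in time. The claim I would isolate is: if $\supp f \subseteq B(a;R)$ then $\supp G(t,s)f \subseteq B(a; c\lambdamax|t-s|+R)$. To prove it, observe that in the action \eqref{2.4} the matrix part of the phase, namely $-(t-s)(c\alphahat\cdot\xi + \betahat mc^2)$, is independent of $x,y$, while the potential contribution $(x-y)\cdot\int_0^1 A\,d\theta - (t-s)\int_0^1 V\,d\theta$ is a scalar independent of $\xi$. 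Hence, letting $\epsilon\to 0$ in \eqref{3.5} (Proposition 3.2), the Schwartz kernel of $G(t,s)$ factors as $K(x,y)=e^{i\phi(t,s;x,y)}\mathcal K_{t-s}(x-y)$, where $\phi$ is that scalar phase and $\mathcal K_\rho(z)=\int e^{iz\cdot\xi}e^{-i\rho(c\alphahat\cdot\xi+\betahat mc^2)}\,\dbar\xi$ is the free kernel. Since multiplication by the smooth factor $e^{i\phi}$ does not enlarge the support of a distribution, the support of $K$ in $x-y$ is governed entirely by $\mathcal K_\rho$, and it suffices to show $\supp \mathcal K_\rho \subseteq \{|z|\le c\lambdamax|\rho|\}$.

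This last point is where the Paley--Wiener theorem enters and is, I expect, the technical heart of the argument. The symbol extends to the entire matrix function $m(\zeta)=e^{-i\rho(c\alphahat\cdot\zeta+\betahat mc^2)}$, $\zeta=\xi+i\eta$, and I would bound its norm by the logarithmic–norm inequality $\|e^{M}\|\le e^{\mu}$, where $\mu$ is the largest eigenvalue of the Hermitian matrix $(M+M^{*})/2$. A direct computation shows that the Hermitian part of $M=-i\rho(c\alphahat\cdot\zeta+\betahat mc^2)$ is exactly $\rho c\,\alphahat\cdot\eta$, since the contributions of $\betahat mc^2$ and of $\mathrm{Re}\,\zeta=\xi$ become anti-Hermitian after multiplication by $-i\rho$. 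By the homogeneity \eqref{4.9} and the antipodal symmetry $\lambda_j(-\xi)=-\lambda_j(\xi)$, every eigenvalue of $\alphahat\cdot\eta$ is bounded in absolute value by $\lambdamax|\eta|$, whence $\|m(\zeta)\|\le e^{c\lambdamax|\rho|\,|\eta|}$. The Paley--Wiener theorem then forces $\mathcal K_\rho$ to be supported in $\{|z|\le c\lambdamax|\rho|\}$; to stay within the $(L^2)$ formulation of Theorem IX.11 of \cite{Reed-Simon} I would apply it not to the bounded symbol itself but to the functions $G(t,s)g$ in the free case $A=V=0$ (there $G(t,s)g=\mathcal K_\rho * g$ and $G(t,s)$ is unitary on $(L^2)^N$) as $g$ runs through an approximate identity, reading off the support of $\mathcal K_\rho$ in the limit.

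With the single–step estimate in hand the final assembly is short but rests on one decisive choice. A naive iteration of \eqref{3.11} over an arbitrary time–division $\Delta$ would only give a spread $c\lambdamax\sum_{j}|\tau_{j+1}-\tau_j|$, the total variation of the time–path, which can be arbitrarily large for the many-turn divisions allowed by Corollary 2.3. To avoid this I would compute $K_D(t_f,t_i)f$ using the monotone divisions $\Delta_n$ that subdivide $[t_i,t_f]$ (or $[t_f,t_i]$) into $n$ equal pieces all traversed in one direction: then $\sum_j|\tau_{j+1}-\tau_j|=|t_f-t_i|$ while $\sigma(\Delta_n)=|t_f-t_i|^2/n\to 0$, so by (2) of Theorem 2.2 we still have $K_{D\Delta_n}(t_f,t_i)f\to K_D(t_f,t_i)f$ in $(L^2)^N$. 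Iterating the single–step estimate along $\Delta_n$ gives $\supp K_{D\Delta_n}(t_f,t_i)f\subseteq B(a;c\lambdamax|t_f-t_i|+R)$ for every $n$, one fixed closed ball; since the functions supported in a fixed closed set form a closed subspace of $(L^2)^N$, the limit $K_D(t_f,t_i)f$ inherits the same support bound, which is the assertion. The main obstacle is thus the combination in the second paragraph — the matrix norm estimate feeding the Paley--Wiener theorem — together with the recognition that one must select the monotone divisions rather than argue with a generic $\Delta$.
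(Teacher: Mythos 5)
Your proposal is correct and reaches the paper's conclusion, but by a genuinely different route in its two technical steps, while the endgame coincides with the paper's almost verbatim: the paper likewise restricts to monotone divisions \eqref{4.28}, iterates a one-step support estimate through the factorization \eqref{3.11} (its Proposition 4.8), extends to $(L^2)^N$ by density using \eqref{2.19}, and passes to the limit by (2) of Theorem 2.2 --- your observation that a generic many-turn division would only yield the total-variation spread $c\lambdamax\sum_j|\tau_{j+1}-\tau_j|$ is exactly the reason for that choice, which the paper makes without comment. The first divergence is the matrix bound: where the paper's Lemmas 4.4 and 4.5 diagonalize $\alphahat\cdot\eta$ and then invoke the Lie product formula (Proposition 4.B) twice, you use the logarithmic-norm inequality $\Vert e^M\Vert \leq e^{\mu}$ with Hermitian part $(M+M^*)/2 = \rho c\,\alphahat\cdot\eta$; this is shorter and more elementary, and your two-sided eigenvalue bound via the spectral symmetry $\lambda_j(-\xi)=-\lambda_j(\xi)$ is the same use of \eqref{4.9} the paper makes implicitly. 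The second divergence is the treatment of the potentials in the one-step operator: you factor the Schwartz kernel as $e^{i\phi(t,s;x,y)}\mathcal{K}_{t-s}(x-y)$ and argue with distributional supports, your approximate-identity mollification of $\mathcal{K}_\rho$ playing precisely the role of the paper's choice $\chi(\epsilon\xi)=\widehat{\psi}(\epsilon\xi)$ in Proposition 4.6; the paper instead stays at the level of functions, expanding the amplitude $w(t,s;x,y)$ of \eqref{4.20}--\eqref{4.21} in a Fourier series \eqref{4.23} so that each term is the free propagator applied to a compactly supported function, to which Corollary 4.7 applies termwise. Your kernel route is structurally cleaner, but it needs $e^{i\phi}$ to be a smooth multiplier with polynomially bounded derivatives, and this is available only after the gauge-transformation reduction to \eqref{2.14}--\eqref{2.16} that the paper performs as its very first step in the proof of Theorem 4.2; you rely on it implicitly through Proposition 3.2, and you should state it explicitly, noting that it is harmless for supports since \eqref{2.13} multiplies only by unimodular functions.
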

    The corollary below assures us that the Feynman path integral for the genuine Dirac equation satisfies the causality principle.
    \begin{cor}
    Besides the assumptions of Theorem 2.2 we suppose \eqref{1.2}.  Then the Feynman path integral $K_D(t_f,t_i)f$  for $f \in (L^2)^N$ has the speed not exceeding  $c$, the velocity  of light, of propagation of disturbances.  
    \end{cor}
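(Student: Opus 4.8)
The plan is to invoke Theorem 4.2, which already guarantees that $K_D(t_f,t_i)f$ propagates disturbances at speed at most $c\lambdamax$, and to show that the Clifford relations \eqref{1.2} force $\lambdamax \leq 1$, so that the speed is bounded by $c$. Everything then reduces to computing the eigenvalues of the Hermitian matrix $\alphahat\cdot\xi$ on the unit sphere.

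First I would compute $(\alphahat\cdot\xi)^2$ for $\xi \in \bR^d$, symmetrizing the double sum so as to bring in the anticommutators provided by \eqref{1.2}:
\begin{equation*}
(\alphahat\cdot\xi)^2 = \sum_{j,k=1}^d \alphahatj\alphahat^{(k)}\xi_j\xi_k = \frac{1}{2}\sum_{j,k=1}^d\bigl(\alphahatj\alphahat^{(k)} + \alphahat^{(k)}\alphahatj\bigr)\xi_j\xi_k = \sum_{j=1}^d\xi_j^2\, I_N = |\xi|^2 I_N.
\end{equation*}

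Then I would use Hermiticity: since each $\alphahatj$ is Hermitian, $\alphahat\cdot\xi$ is Hermitian with real eigenvalues $\lambda_k(\xi)$, and the identity just obtained forces $\lambda_k(\xi)^2 = |\xi|^2$ for every $k$. Restricting to the unit sphere $|\xi| = 1$ gives $\lambda_k(\xi) = \pm 1$, whence
\begin{equation*}
\lambdamax = \max_{k=1,\dotsc,N}\max_{|\xi|=1}\lambda_k(\xi) \leq 1.
\end{equation*}

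Finally, combining this bound with Theorem 4.2 yields a speed of propagation of disturbances of at most $c\lambdamax \leq c$, which is the assertion. I do not expect a genuine obstacle here: the single delicate point is the symmetrization step, where the full Dirac anticommutation relations in \eqref{1.2}---and not merely the Hermiticity of the $\alphahatj$ assumed throughout the rest of the paper---are what collapse the matrix-valued quadratic form $\sum_{j,k}\alphahatj\alphahat^{(k)}\xi_j\xi_k$ into the scalar $|\xi|^2 I_N$, pinning the spectrum of $\alphahat\cdot\xi$ to $\{\pm 1\}$ on the unit sphere.
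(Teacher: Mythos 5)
Your proposal is correct and follows essentially the same route as the paper: the paper's proof of Corollary 4.3 also derives $(\alphahat\cdot\xi)^2 = |\xi|^2$ from \eqref{1.2} (citing \S 67 of Dirac's book for the computation you spell out explicitly), uses hermiticity to conclude $|\lambda_j(\xi)| = |\xi|$, and then applies Theorem 4.2. The only cosmetic difference is that the paper asserts $\lambdamax = 1$ exactly while you settle for $\lambdamax \leq 1$, which indeed suffices for the stated conclusion.
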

    \begin{proof}
    From \eqref{1.2} we can easily have
    \begin{equation} \label{4.10}
    (\alphahat\cdot\xi)^2 = |\xi|^2\quad (\xi \in \bR^d)
    \end{equation}
    by the same argument as in \S 67 of \cite{Dirac}, which shows $\lambda_j(\xi)^2 = |\xi|^2$ and so
    $|\lambda_j(\xi)|= |\xi|$.  It follows from the hermiticity of  $\alphahat\cdot\xi$ that $\lambda_j(\xi)$ is real, which implies $\lambda_{\max} = 1$ from \eqref{4.8} and \eqref{4.9}.  Consequently we obtain Corollary 4.3 from Theorem 4.2.
    \end{proof}
    Now, we will state the well-known results as the Paley-Wiener theorem (cf. Theorem IX.11 on p.333 in \cite{Reed-Simon}) and Lie product formula (cf. Theorem VIII.29 on p.295 in \cite{Reed-Simon}) that will be used to prove Theorem 4.2. \par
\vspace{0.2cm}
{\bf Proposition 4.A} (Paley-Wiener). {\it Let $\zeta := \xi + i\eta \in \mathbb{C}^d$ be complex variables where $\xi \in \bR^d$ and $\eta \in \bR^d$.
An entire analytic function  $g(\zeta)$  on $\mathbb{C}^d$ is the Fourier transform $\widehat{f}(\zeta)  := \int e^{-i\zeta\cdot x}f(x)dx$ of a function  $f(x) \in C_0^{\infty}(\bR^d)$ with support in $B(0;R)$, if and only if for each $n = 1,2,\dots$ there is a constant $C_n \geq 0$ so that
\begin{equation*}
|g(\zeta)| \leq \frac{C_ne^{R|\eta|}}{(1 + |\zeta|)^n}
\end{equation*}
for all $\zeta \in \mathbb{C}^d$. }\par
\vspace{0.2cm}
    Let $A$ be an $N\times N$ matrix.  We write its norm $\sup_{|u|=1}|Au|$ as $\Vert A \Vert$, where $u = {}^t(u_1,\dots,u_N) \in \mathbb{C}^N$ and $|u| = \sqrt{\sum_{j=1}^N|u_j|^2}$. \par
\vspace{0.2cm}
{\bf Proposition 4.B} (Lie product formula). {\it Let $A$ and $B$ be finite dimensional matrices.   Then we have
\begin{equation*}
\exp (A + B) = \lim_{n\rightarrow \infty}\left[\exp \frac{A}{n} \exp \frac{B}{n}\right]^n
\end{equation*}
in the topology of the norm.
}\par
\vspace{0.2cm}
   The following lemma is  essential for the proof of Theorem 4.2.
   \begin{lem}
   Let $\lambda_{\max}$ be the constant defined by \eqref{4.8}.  Let $\rho \in \bR$ and $u = {}^t(u_1,\dots,u_N)\in \mathbb{C}^N$.  Then we have
\begin{equation*}
\left| \exp \Bigl(-i\rho c\alphahat\cdot(\xi + i\eta)\Bigr) u\right|  \leq \Bigl(\exp  (|\rho |c| \eta|\lambdamax)\Bigr)|u|.
\end{equation*}
   \end{lem}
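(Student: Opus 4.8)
The plan is to split the exponent into its skew-Hermitian and Hermitian parts and estimate the two pieces separately, bridging their non-commutativity with the Lie product formula (Proposition 4.B). Writing $\alphahat\cdot(\xi + i\eta) = \alphahat\cdot\xi + i\,\alphahat\cdot\eta$, I would set $A := -i\rho c\,\alphahat\cdot\xi$ and $B := \rho c\,\alphahat\cdot\eta$, so that the matrix appearing in the exponent is exactly $A + B$. Since every $\alphahat^{(j)}$ is Hermitian, both $\alphahat\cdot\xi$ and $\alphahat\cdot\eta$ are Hermitian; hence $A$ is skew-Hermitian, so $\exp A$ is unitary and $\Vert\exp A\Vert = 1$, while $B$ is Hermitian.

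First I would record the spectral bound for $B$. For a Hermitian matrix the operator norm of its exponential equals the exponential of its largest eigenvalue $\lambda_{\max}(B)$, and $\lambda_{\max}(B) \le \Vert B\Vert = |\rho|c\,\Vert\alphahat\cdot\eta\Vert$. To control $\Vert\alphahat\cdot\eta\Vert = \max_j|\lambda_j(\eta)|$ I would invoke the homogeneity \eqref{4.9}: writing $\eta = |\eta|\hat\eta$ with $|\hat\eta| = 1$, each eigenvalue satisfies $\lambda_j(\eta) = |\eta|\lambda_j(\hat\eta)$. By the definition \eqref{4.8} we have $\lambda_j(\hat\eta) \le \lambdamax$, and since $-\hat\eta$ is also a unit vector, \eqref{4.8} together with \eqref{4.9} (with $s = -1$) gives $\lambda_j(\hat\eta) = -\lambda_j(-\hat\eta) \ge -\lambdamax$; hence $|\lambda_j(\hat\eta)| \le \lambdamax$ and $\Vert\alphahat\cdot\eta\Vert \le |\eta|\lambdamax$. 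Thus $\lambda_{\max}(B) \le |\rho|c|\eta|\lambdamax$, so that $\Vert\exp(B/n)\Vert = e^{\lambda_{\max}(B)/n} \le e^{|\rho|c|\eta|\lambdamax/n}$ for every $n$.

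Next I would assemble the two pieces through Proposition 4.B, which gives $\exp(A+B) = \lim_{n\to\infty}[\exp(A/n)\exp(B/n)]^n$ in the operator-norm topology. Submultiplicativity of the norm then yields
\[\bigl\Vert[\exp(A/n)\exp(B/n)]^n\bigr\Vert \le \bigl(\Vert\exp(A/n)\Vert\,\Vert\exp(B/n)\Vert\bigr)^n \le \bigl(1\cdot e^{|\rho|c|\eta|\lambdamax/n}\bigr)^n = e^{|\rho|c|\eta|\lambdamax},\]
uniformly in $n$. Letting $n\to\infty$ and using continuity of the norm, I obtain $\Vert\exp(A+B)\Vert \le \exp(|\rho|c|\eta|\lambdamax)$, and applying the resulting operator to $u$ gives the asserted inequality.

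The only genuine obstacle is the non-commutativity of $A$ and $B$, which forbids the naive factorization $\exp(A+B) = \exp A\,\exp B$; this is precisely what the Lie product formula circumvents, since on each individual factor $\exp(A/n)\exp(B/n)$ the norm does split multiplicatively even though the full exponential does not. A secondary point demanding care is the two-sided eigenvalue bound $|\lambda_j(\hat\eta)| \le \lambdamax$: the definition \eqref{4.8} controls the eigenvalues only from above, so one genuinely needs the oddness relation \eqref{4.9} to produce the matching lower bound.
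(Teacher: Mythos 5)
Your proof is correct and takes essentially the same route as the paper's: the same splitting of $-i\rho c\,\alphahat\cdot(\xi+i\eta)$ into the unitary factor $\exp(-i\rho c\,\alphahat\cdot\xi)$ and the Hermitian factor $\exp(\rho c\,\alphahat\cdot\eta)$, the latter bounded spectrally via \eqref{4.8} and the homogeneity \eqref{4.9}, with the two pieces recombined through the Lie product formula (Proposition 4.B). The only differences are presentational: you estimate operator norms by submultiplicativity where the paper iterates the vector estimate $n$ times after explicit unitary diagonalization, and you make explicit the two-sided eigenvalue bound $|\lambda_j(\eta/|\eta|)|\leq \lambdamax$ that the paper invokes only implicitly ``together with \eqref{4.9}.''
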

   \begin{proof}
   Let $\eta = 0$.  Then 
\begin{equation*}
\left| \exp \Bigl(-i\rho c\alphahat\cdot(\xi + i\eta)\Bigr) u\right|  = \left| \exp (-i\rho c\alphahat\cdot \xi)u\right| = |u|
\end{equation*}
holds since $\exp (-i\rho c\alphahat\cdot \xi)$ is unitary, which shows Lemma 4.4. \par
   Let $\eta \not= 0$.  Since $\alphahat\cdot\eta$ is Hermitian, we can have a diagonal matrix 
   \begin{equation*}
   \mathfrak{U}^{-1}(\alphahat\cdot\eta)\mathfrak{U} = 
   \begin{pmatrix}
   \lambda_1(\eta) & 0  & 0 & \cdots & 0\\
    0 & \lambda_2(\eta) &  0 & \cdots & 0 \\
    \hdotsfor{4} & 0\\
    0 & 0 &  0 & \cdots & \lambda_N(\eta)
    \end{pmatrix}
   \end{equation*}
   by using a unitary matrix $\mathfrak{U}$.  Consequently we get
   \begin{align*}
   & \mathfrak{U}^{-1}\exp(\rho c \alphahat\cdot\eta)\mathfrak{U}  = \exp \rho c
   \begin{pmatrix}
   \lambda_1(\eta) & 0  & 0 & \cdots & 0\\
    0 & \lambda_2(\eta) &  0 & \cdots & 0 \\
    \hdotsfor{4} & 0\\
    0 & 0 &  0 & \cdots & \lambda_N(\eta)
    \end{pmatrix}\\
    & = \exp \rho c |\eta|
   \begin{pmatrix}
   \lambda_1(\eta/|\eta|) & 0  & 0 & \cdots & 0\\
    0 & \lambda_2(\eta/|\eta|) &  0 & \cdots & 0 \\
    \hdotsfor{4} & 0\\
    0 & 0 &  0 & \cdots & \lambda_N(\eta/|\eta|)
    \end{pmatrix}\\
    & =
   \begin{pmatrix}
    e^{\rho c |\eta|\lambda_1(\eta/|\eta|)}  & 0  & 0 & \cdots & 0\\
    0 & e^{\rho c |\eta|\lambda_2(\eta/|\eta|)} &  0 & \cdots & 0 \\
    \hdotsfor{4} & 0\\
    0 & 0 &  0 & \cdots & e^{\rho c |\eta|\lambda_N(\eta/|\eta|)}    \end{pmatrix}
   \end{align*}
together with \eqref{4.9},   which yields
   \begin{align*}
  & |\exp(\rho c \alphahat\cdot\eta)u| = |\mathfrak{U}\mathfrak{U}^{-1}\exp(\rho c \alphahat\cdot\eta)\mathfrak{U}
  \mathfrak{U}^{-1}u| \\
  & =  |\mathfrak{U}^{-1}\exp(\rho c \alphahat\cdot\eta)\mathfrak{U}\mathfrak{U}^{-1}u| \leq \Bigl( \exp (|\rho|c|\eta|\lambdamax)\Bigr)|\mathfrak{U}^{-1}u| \\
  & =\Bigl( \exp (|\rho|c|\eta|\lambdamax)\Bigr)|u|
   \end{align*}
   by using the unitarity of $\mathfrak{U}$.  Hence we obtain
   \begin{equation} \label{4.11}
   |\exp(\rho c \alphahat\cdot\eta)u| \leq \Bigl( \exp (|\rho|c|\eta|\lambdamax)\Bigr)|u|.
   \end{equation}
   \par
   Now, Proposition 4.B indicates
   \begin{align} \label{4.12}
 & \exp \Bigl(-i\rho c \alphahat\cdot(\xi + i\eta)\Bigr) u = \exp (\rho c \alphahat\cdot \eta - i\rho c \alphahat\cdot \xi) u \notag\\
  & = \lim_{n\rightarrow \infty}\left[\exp \frac{\rho c \alphahat\cdot \eta}{n}\exp \frac{-i\rho c \alphahat\cdot \xi}{n}\right]^nu \quad \text{in}\ \mathbb{C}^N.
   \end{align}
   Noting \eqref{4.11} and the unitarity of $\exp (-i\rho c \alphahat\cdot \xi/n)$, we can easily prove
   \begin{align*} 
 & |\exp (\rho c \alphahat\cdot \eta/n) \exp (-i\rho c \alphahat\cdot \xi/n)u| \\
 & \leq \Bigl(\exp \bigl(|\rho| c |\eta|\lambdamax/n\bigr)\Bigr)|\exp (-i\rho c \alphahat\cdot \xi/n)u| = \Bigl(\exp \bigl(|\rho| c |\eta|\lambdamax/n\bigr)\Bigr)|u|.
   \end{align*}
In the same way we have
   \begin{align*} 
 & \left|\Bigl[\exp (\rho c \alphahat\cdot \eta/n) \exp (-i\rho c \alphahat\cdot \xi/n)\Bigr]^2u\right| \\
 & \leq \Bigl(\exp \bigl(|\rho| c |\eta|\lambdamax/n\bigr)\Bigr) |\exp (\rho c \alphahat\cdot \eta/n) \exp (-i\rho c \alphahat\cdot \xi/n)u|   \\
 & \leq \Bigl(\exp \bigl(2|\rho| c |\eta|\lambdamax/n\bigr)\Bigr)|u|.
  \end{align*}
Repeating this argument, we can prove
   \begin{align} \label{4.13}
 & \left|\Bigl[\exp (\rho c \alphahat\cdot \eta/n) \exp (-i\rho c \alphahat\cdot \xi/n)\Bigr]^nu\right| \notag \\
 & \leq \Bigl(\exp \bigl(|\rho| c |\eta|\lambdamax\bigr)\Bigr)|u|,
  \end{align}
   which completes the proof of  Lemma 4.4 together with \eqref{4.12}.
   \end{proof}
   \begin{lem}  Let $\rho \in \bR$ and $u = {}^t(u_1,\dots,u_N) \in \mathbb{C}^N$.  Then we have
\begin{equation*}
\left| \exp \Bigl(-i\rho \bigl\{ c\alphahat\cdot(\xi + i\eta) + \betahat mc^2\bigr\}\Bigr) u\right|  \leq\Bigl(\exp \bigl(|\rho| c |\eta|\lambdamax\bigr)\Bigr)|u|.
\end{equation*}
\end{lem}
\begin{proof}
We have
   \begin{align} \label{4.14}
 & \exp \Bigl(-i\rho \bigl\{ c\alphahat\cdot(\xi + i\eta) + \betahat mc^2\bigr\}\Bigr) u  \notag\\
  & = \lim_{n\rightarrow \infty}\left[\exp \frac{-i\rho c \alphahat\cdot (\xi + i\eta)}{n}\exp \frac{-i\rho \betahat mc^2}{n}\right]^nu \quad \text{in}\ \mathbb{C}^N
   \end{align}
   from Proposition 4.B.  Lemma 4.4 shows
   \begin{align*} 
 & \left|\exp \Bigl(-i\rho c \alphahat\cdot (\xi + i\eta)/n\Bigr) \exp \Bigl(-i\rho  \betahat mc^2/n\Bigr)u\right| \\
 & \leq \Bigl(\exp \bigl(|\rho| c |\eta|\lambdamax/n\bigr)\Bigr)\left| \exp \Bigl(-i\rho  \betahat mc^2/n\Bigr)u\right| = \Bigl(\exp \bigl(|\rho| c |\eta|\lambdamax/n\bigr)\Bigr)|u|
   \end{align*}
   because of the unitarity of $\exp (-i\rho  \betahat mc^2/n)$.  Hence we can prove
   \begin{align} \label{4.15}
 & \left|\left[\exp \Bigl(-i\rho c \alphahat\cdot (\xi + i\eta)/n\Bigr) \exp \Bigl(-i\rho  \betahat mc^2/n\Bigr)\right]^nu\right| \notag\\
 & \leq  \Bigl(\exp \bigl(|\rho| c |\eta|\lambdamax\bigr)\Bigr)|u|
   \end{align}
as in the proof of \eqref{4.13}, which completes the proof of Lemma 4.5 together with \eqref{4.14}.
\end{proof}
   Taking a function $\psi(x) \in C_0^{\infty}(\bR^d)$ with support in $B(0;1)$ and $\displaystyle{\int} \psi(x)dx = 1$,  we define $\chi(\xi) \in \Sspace$ by its Fourier transform $\widehat{\psi}(\xi)$.  Then $\chi(0) = 1$ holds.  We fix this $\chi(\xi)$ hereafter.  For $\epsilon > 0$ and $f \in \Sspace^N$ let us write
\begin{align}  \label{4.16}
    &  \left(G_{\epsilon}^0(t,s)f\right) (x) := \iint e^{i(x-y)\cdot\xi-i\rho(c\alphahat\cdot \xi  + \betahat mc^2)}  f(y)\chi(\epsilon\xi)dy\dbar\xi     \notag \\
       &  = \int  e^{ix\cdot\xi-i\rho(c\alphahat\cdot \xi + \betahat mc^2)}\widehat{f}(\xi)\chi(\epsilon\xi) \dbar\xi, 
       \  \ \rho = t - s,
\end{align}
which is equal to $\Gepsilon(t,s)f$ defined by \eqref{3.5} with $V = 0$ and $A = 0$.
\begin{pro}
Let $f \in C_0^{\infty}(\bR^d)^N$ with support in $B(0;R)$.  Then we have $\supp\  G_{\epsilon}^0(t,s)f \subset B(0;c\lambdamax|t - s| + R + \epsilon).$
\end{pro}
\begin{proof}
The expression \eqref{4.16} gives that the Fourier transform of $G_{\epsilon}^0(t,s)f$ is 
\begin{equation}  \label{4.17}
      v_{\epsilon}(t,s;\xi) := e^{-i\rho(c\alphahat\cdot \xi + \betahat mc^2)}\widehat{f}(\xi)\chi(\epsilon\xi).
    \end{equation}
    Proposition 4.A indicates that $\widehat{f_j}(\xi)\ (j = 1,2,\dots,N)$ and $\chi(\xi)$ can be extended to entire functions on $\mathbb{C}^d$ and satisfy
\begin{equation} \label{4.18}
 |\widehat{f_j}(\zeta)| \leq \frac{C_ne^{R|\eta|}}{(1 + |\zeta|)^n}, \quad |\chi(\zeta)| \leq \frac{C_ne^{|\eta|}}{(1 + |\zeta|)^n}
\end{equation}
for each $n = 1,2,\dots$ with a constant $C_n \geq 0$.  Hence, applying Lemma 4.5 to \eqref{4.17}, we can see by \eqref{4.18} that $v_{\epsilon}(t,s;\xi)$ can be extended to an entire function on $\mathbb{C}^d$ and satisfies 
\begin{equation} \label{4.19}
 |v_{\epsilon}(t,s;\zeta)| \leq e^{|\rho|c|\eta|\lambdamax}|\widehat{f}(\zeta)||\chi(\epsilon\zeta)|
 \leq  \frac{C_n^2e^{(|\rho|c\lambdamax + R + \epsilon)|\eta|}}{(1 + |\zeta|)^{n}(1 + |\epsilon\zeta|)^{n}}
\end{equation}
for each $n$, which proves Proposition 4.6 from Proposition 4.A.
\end{proof}
\begin{cor}
Let $f \in \Czerospace^N$ with support in $B(a;R)$ for $a \in \bR^d$.  Then we have $\supp\  G_{\epsilon}^0(t,s)f \subset B(a;c\lambdamax|t - s| + R + \epsilon)$.
\end{cor}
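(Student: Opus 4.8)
The plan is to reduce Corollary 4.7 to Proposition 4.6 by exploiting the fact that $G_{\epsilon}^0(t,s)$ is a Fourier multiplier operator and hence commutes with spatial translations. The operator acts on $\widehat{f}(\xi)$ only through the multiplier $e^{-i\rho(c\alphahat\cdot\xi + \betahat mc^2)}\chi(\epsilon\xi)$, which depends on $\xi$ alone; consequently, shifting the argument of $f$ by $a$ simply shifts the argument of $G_{\epsilon}^0(t,s)f$ by the same $a$. This lets me transport the origin-centered conclusion of Proposition 4.6 to an arbitrary center $a$.

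First I would set $g(x) := f(x+a)$. Since $f$ is supported in $B(a;R)$, the function $g$ lies in $\Czerospace^N$ with support in $B(0;R)$, so Proposition 4.6 applies to $g$ and gives $\supp\ G_{\epsilon}^0(t,s)g \subset B(0;c\lambdamax|t-s| + R + \epsilon)$. Next I would verify the translation identity. Because the Fourier transform convention $\widehat{f}(\xi) = \int e^{-i\xi\cdot x}f(x)\,dx$ yields $\widehat{g}(\xi) = e^{ia\cdot\xi}\widehat{f}(\xi)$, the representation \eqref{4.16} gives, with $\rho = t-s$,
\begin{align*}
\left(G_{\epsilon}^0(t,s)g\right)(x) &= \int e^{i(x+a)\cdot\xi - i\rho(c\alphahat\cdot\xi + \betahat mc^2)}\widehat{f}(\xi)\chi(\epsilon\xi)\,\dbar\xi \\
&= \left(G_{\epsilon}^0(t,s)f\right)(x+a).
\end{align*}
Equivalently, $\left(G_{\epsilon}^0(t,s)f\right)(y) = \left(G_{\epsilon}^0(t,s)g\right)(y-a)$.

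Finally I would read off the support from this identity: $\left(G_{\epsilon}^0(t,s)f\right)(y)$ can be nonzero only when $y-a \in B(0;c\lambdamax|t-s| + R + \epsilon)$, that is, when $y \in B(a;c\lambdamax|t-s| + R + \epsilon)$, which is precisely the assertion of Corollary 4.7. The argument is entirely routine; the only point demanding care is the bookkeeping of the translation phase under the Fourier transform, and there is no genuine obstacle, since all the analytic content — the Paley--Wiener bound of Proposition 4.A combined with the exponential estimate of Lemma 4.5 — has already been discharged in the proof of Proposition 4.6.
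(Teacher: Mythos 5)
Your proposal is correct and coincides with the paper's own proof: both arguments set $g(x) := f(x+a)$, use the phase shift $\widehat{g}(\xi) = e^{ia\cdot\xi}\widehat{f}(\xi)$ in the representation \eqref{4.16} to get $\left(G_{\epsilon}^0(t,s)f\right)(x) = \left(G_{\epsilon}^0(t,s)g\right)(x-a)$, and then transport the support statement of Proposition 4.6 from $g$ to $f$. Your observation that the scalar phase $e^{ia\cdot\xi}$ commutes with the matrix multiplier $e^{-i\rho(c\alphahat\cdot\xi + \betahat mc^2)}$ is the only point of care, and you handle it correctly.
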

\begin{proof}
Set $g(x) := f(x + a)$.  
Then $\widehat{g}(\xi) = e^{ia\cdot\xi}\widehat{f}(\xi)$ and  $\supp\ g \subset B(0;R)$.  Consequently from \eqref{4.16} we can see
\begin{align*}  
    &  \left(G_{\epsilon}^0(t,s)f\right) (x) = \int  e^{ix\cdot\xi-i\rho(c\alphahat\cdot \xi + \betahat mc^2)}
    e^{-ia\cdot\xi}\widehat{g}(\xi)\chi(\epsilon\xi) \dbar\xi    \notag \\
       &  =  \left(G_{\epsilon}^0(t,s)g\right) (x-a). 
\end{align*}
Hence Corollary 4.7 is yielded
since $\supp\  G_{\epsilon}^0(t,s)g \subset B(0;c\lambdamax|t - s| + R + \epsilon)$ follows from Proposition 4.6.
\end{proof}
\begin{pro}
Let $t$ and $s$ be in $\It$. 
We consider the operator $G(t,s)$ on $\Sspace^N$ defined in Proposition 3.2.  Then $G(t,s)f$ for $f \in \Czerospace^N$ has the speed not exceeding $c\lambdamax$ of propagation of disturbances.
\end{pro}
\begin{proof}
 Let $f \in \Czerospace^N$ with support in $B(a;R)$. Take $\varphi_R(x) \in C_0^{\infty}(\bR^d)$ such that $\varphi_R(x) = 1$ if $|x| \leq R$.  Noting \eqref{2.4}, \eqref{2.5} and $\supp\ f \subset B(a;R)$, we can write $\Gepsilon(t,s)f$ defined by \eqref{3.5} as
\begin{equation}  \label{4.20}
     \left(G_{\epsilon}(t,s)f\right) (x) = \iint e^{i(x-y)\cdot\xi-i\rho(c\alphahat\cdot \xi  + \betahat mc^2)} w(t,s;x,y) f(y)\chi(\epsilon\xi)dy\dbar\xi,  
     \end{equation}
     where
\begin{align}  \label{4.21}
    & w(t,s;x,y) = \varphi_R(y-a)\exp \Bigl\{i(x-y)\cdot \int_0^1A(t-\theta\rho,x-\theta(x-y))d\theta \notag\\
     &  -i\rho\int_0^1V(t-\theta\rho,x-\theta(x-y))d\theta \Bigr\}.
     \end{align}
     Let $\alpha$ be multi-indices such that $|\alpha| = 2d$.  It follows from the assumptions of Proposition 3.2 that we have
\begin{align} \label{4.22}
& |\partial_y^{\alpha}w(t,s;x,y)| \leq C(<x>^{M+1}<x-y>^{M+1})^{2d}\sum_{|\beta|\leq 2d} 
|\partial_y^{\beta}\varphi_R(y-a)|    \notag\\
& \leq C'(<x>^{2(M+1)}<y>^{M+1})^{2d}\sum_{|\beta|\leq 2d} 
|\partial_y^{\beta}\varphi_R(y-a)|  \notag \\
& \leq C'<R + |a|>^{2(M+1)d}<x>^{4(M+1)d}\sum_{|\beta|\leq 2d} 
|\partial_y^{\beta}\varphi_R(y-a)|,
\end{align}
where we used $<x - y> \leq \sqrt{2}<x><y>$.  \par
   Using $\supp\ w(t,s;x,\cdot) \subset B(a;R)$, we can expand $w(t,s;x,y)$ into a Fourier series with respect to variables $y \in B(a;R)$ 
\begin{equation}  \label{4.23}
    w(t,s;x,y) = \sum_{n_1=-\infty}^{\infty} \cdots \sum_{n_d=-\infty}^{\infty}c_n(t,s;x)e^{in\omega\cdot(y-a)},
     \end{equation}
\begin{equation}  \label{4.24}
   c_n(t,s;x)=\left (\frac{1}{2l}\right)^d\int_I    w(t,s;x,y)e^{-in\omega\cdot(y-a)}dy,
   \end{equation}
     where $n = (n_1,\dots,n_d)$, $l \geq R$ is a constant, $\omega = \pi/l$ and $I$ is a cube in $\bR^d$ with edges of length $2l$.  From \eqref{4.22} and \eqref{4.24} we have
\begin{equation}  \label{4.25}
   |c_n(t,s;x)| \leq \frac{C<x>^{4(M+1)d}}{n_1^2\cdots n_d^2}.
   \end{equation}
Hence, using \eqref{4.20} and \eqref{4.23}, we can write $(\Gepsilon(t,s)f)(x)$ as an infinite series 
\begin{align}  \label{4.26}
     & \sum_{n_1=-\infty}^{\infty} \cdots \sum_{n_d=-\infty}^{\infty}c_n(t,s;x)\iint e^{i(x-y)\cdot\xi-i\rho(c\alphahat\cdot \xi  + \betahat mc^2)} e^{in\omega\cdot(y-a)} \notag \\
     &\quad \times  f(y)\chi(\epsilon\xi)dy\dbar\xi,
     \end{align}
     which converges uniformly on compact sets in $\bR_x^d$.  Consequently we see
\begin{equation}  \label{4.27}
  \supp\ \Gepsilon(t,s)f \subset B(a;c\lambdamax|t - s| + R + \epsilon)
    \end{equation}
     since
     Corollary 4.7 shows that the support of each term in \eqref{4.26} is in $B(a;\\ c\lambdamax|t - s| + R + \epsilon)$.  Therefore, we can complete the proof of Proposition 4.8 from Proposition 3.2.
\end{proof}
   Now, we will prove Theorem 4.2.  Let us use the gauge transformation as in the proof of Theorem 2.2.  Hence we may assume \eqref{2.14}-\eqref{2.16}. Suppose $t_i < t_f$.  Another case can be proved in the same way.  Take a 
   time-division $\{\tau_j\}_{j=1}^{\nu-1}$ satisfying 
\begin{equation}  \label{4.28}
  t_i <\tau_1 <\tau_2 < \dots < \tau_{\nu-1} < t_f.
    \end{equation}
     Then, applying Proposition 4.8 to each $G(\tau_j,\tau_{j-1})$ in \eqref{3.11}, we can see that $K_{D\Delta}(t_f,t_i)f$ for $f \in \Czerospace^N$ has the speed not exceeding $c\lambdamax$ of propagation of disturbances and so does $K_{D\Delta}(t_f,t_i)f$ for $f \in (L^2)^N$, which  can be proved from \eqref{2.19} by making $f$ approximated in $(L^2)^N$ by functions in $\Czerospace^N$.
     Therefore, we have been able to complete the proof of Theorem 4.2 from (2) in Theorem 2.2.

\begin{thebibliography}{99} %
%
%
%
%
%
%
%
%
\bibitem{Dirac}
Dirac, P.A.M.:
	  The Principles of Quantum Mechanics, Fourth Edition. Oxford University Press, Oxford (1958)
%
%
\bibitem{Dyson 1980}
  Dyson, F.: Comment on the topic ``Beyond the black  hole".  In:  Some Strangeness in the Proportion: A Centennial Symposium to Celebrate the Achievements of Albert Einstein,  pp. 376-380. Addison-Wesely,  Reading (1980)
%
%
%
 %
\bibitem{Feynman positron}
 Feynman, R.P. : Theory of positrons.
Phys. Rev.  {\bf 76}, 749-759 (1949)
%
%
  %
%
 %
%
\bibitem{Feynman-Hibbs}
 Feynman, R.P.,  Hibbs,  A.R.:  Quantum Mechanics and Path Integrals. McGraw-Hill,  New York (1965)
 %
 %
 %
  %
  %
  %
 %
%
%
%
%
%
%
\bibitem{Ichinose 1995}  
Ichinose, W.:  A note on the existence and $\hbar$-dependency  of the 
solution of equations in quantum mechanics. Osaka J. Math. {\bf 32}, 327-345 (1995)
%
%
\bibitem{Ichinose 1997} 
Ichinose, W.:  On the formulation of the Feynman path integral
through broken line paths. Commun. Math. Phys. {\bf 189}, 17-33 (1997)
%
%
\bibitem{Ichinose 1999} 
Ichinose, W.: On convergence of the Feynman path integral formulated
through broken line paths. Rev. Math. Phys.  {\bf 11}, 1001-1025 (1999)
%
%
%
%
%
%
%
\bibitem{Ichinose 2014} 
Ichinose, W.: On  the Feynman path integral for the Dirac equation in the general dimensional spacetime. Commun. Math. Phys. {\bf 329}, 483-508  (2014)
%
%
%
%
%
\bibitem{John}
John, F.: Partial Differential Equations, Fourth Edition.
 Springer, New York (1982)
 %
%
\bibitem{Kumano-go}
 Kumano-go, H.:  Pseudo-Differential Operators.  MIT Press, Cambridge (1981)
%
%
 %
%
%
%
%
\bibitem{Mizohata}
Mizohata, S.: The Theory of Partial Differential Equations. Cambridge University Press,  New York (1973)
%
%
%
%
%
%
%
%
\bibitem{Reed-Simon}
Reed, M.  and Simon, B.: Methods of Modern Mathematical Physics I: Functional Analysis.  Academic Press, San Diego (1980)
%
%
%
%
\bibitem{Schweber}
 Schweber, S.S.: QED and the Men Who Made it: Dyson, Feynman, Schwinger and Tomonaga. Princeton University Press,  Princeton (1994)
%

%
%
\bibitem{Taylor}
Taylor,  M.E.: Pseudodifferential Operators. Princeton University Press,  Princeton (1981)
%
%
%
%
\end{thebibliography}
\end{document}